\documentclass[runningheads]{llncs}

\usepackage[T1]{fontenc}
\usepackage{amsmath,amssymb,mathtools}
\usepackage{microtype}
\usepackage{enumitem}
\usepackage{placeins}
\usepackage[colorlinks,allcolors=blue,bookmarksdepth=2]{hyperref}
\usepackage[nameinlink,capitalize]{cleveref}

\AddToHook{env/proof/end}{\qed}

\crefname{lemma}{Lemma}{Lemmas}
\Crefname{lemma}{Lemma}{Lemmas}
\crefname{theorem}{Theorem}{Theorems}
\Crefname{theorem}{Theorem}{Theorems}

\newcommand{\Z}{\mathbb Z}
\newcommand{\F}{\mathbb F}
\newcommand{\cL}{\mathcal L}
\newcommand{\cC}{\mathcal C}
\newcommand{\cG}{\mathcal G}
\newcommand{\poly}{\operatorname{poly}}
\newcommand{\Ftwo}{F_2}
\newcommand{\Cell}{\operatorname{cell}}
\newcommand{\norm}[1]{\lVert #1\rVert_2}
\newcommand{\mat}[1]{\boldsymbol{#1}}
\newcommand{\vct}[1]{\boldsymbol{#1}}
\newcommand{\mypara}[1]{\textbf{#1.}}

\title{Faster SVP in Polynomial Space}
\author{Yansong Feng \inst{1} \and Yiming Gao \inst{2} \and Jiaqi Liu \inst{1} }

\institute{
Academy of Mathematics and Systems Science,
Chinese Academy of Sciences\\
\email{\{fengyansong,ljqi\}@amss.ac.cn}
\and
School of Cyber Science and Technology,
University of Science and Technology of China\\
\email{qw1234567@mail.ustc.edu.cn}
}

\begin{document}
\maketitle

\begin{abstract}
Kannan's algorithm, as analyzed by Hanrot and Stehl\'e in 2007, solves the exact Euclidean shortest vector problem in polynomial space and \(n^{\frac{n}{2e}+o(n)}\) time. In the classical setting with polynomial space, we obtain the first improvement on this bound via a randomized algorithm that runs in \(n^{\frac{n}{4e}+o(n)}\) time.

The main idea is to represent a fixed shortest vector in many ways as a difference of samples, thereby enabling the low-space collision search of Lyu and Zhu (SODA 2023) to replace exhaustive enumeration in the original analysis.

\keywords{Lattice \and Shortest vector problem \and Enumeration}
\end{abstract}

\section{Introduction}
\label{sec:introduction}
A full-rank lattice $\cL\subseteq\mathbb R^n$ is the set of integer
linear combinations of a basis $\mat B=[\vct b_1\ \cdots\ \vct b_n]$.
Given such a basis, the Euclidean shortest vector problem (SVP) asks
for a nonzero vector of minimum norm in $\cL$.  Algorithms for SVP
play an important role in lattice cryptanalysis~\cite{NguyenStern01},
and their cost informs concrete security estimates for lattice-based
cryptography~\cite{Peikert16}.  We study exact SVP in the classical
polynomial-space setting.

Kannan's algorithm is the standard deterministic algorithm in this
setting~\cite{Kannan87}.  It recursively computes
Hermite--Korkine--Zolotarev (HKZ) reductions of projected sublattices
to obtain a quasi-HKZ basis, then enumerates all lattice
vectors of norm at most $\norm{\vct b_1}$.  The enumeration tree can
be traversed in polynomial space~\cite{FinckePohst85,SchnorrEuchner94}.
Kannan bounded the running time by $n^{n+o(n)}$, and Helfrich improved
the analysis to $n^{n/2+o(n)}$~\cite{Kannan87,Helfrich85}.
Hanrot and Stehl\'e obtained the bound
$n^{\frac{n}{2e}+o(n)}$ for the same algorithm~\cite{HanrotStehle07}.
Their analysis exploits the Gram--Schmidt shape of a quasi-HKZ basis
to bound the number of integer points in the enumeration ellipsoids directly,
instead of using an enclosing parallelepiped.  The quasi-HKZ
preprocessing itself is already part of Kannan's recursion. Hanrot and Stehl\'e later constructed HKZ bases on which Kannan's
enumeration procedure requires $n^{\frac{n}{2e}+o(n)}$ bit
operations~\cite{HanrotStehleWorst}.  This matches the leading exponent
of the upper bound for that procedure, but does not rule out a different
search that finds a shortest vector without traversing the entire tree.

A different line of work obtains single-exponential time by allowing
exponential memory. The celebrated AKS sieve proposed by Ajtai, Kumar and Sivakumar gave a randomized
\(2^{O(n)}\)-time and \(2^{O(n)}\)-space algorithm~\cite{AKS01}. Later,
Micciancio and Voulgaris gave a deterministic Voronoi-cell algorithm using
\(\widetilde O(4^n)\) time and \(\widetilde O(2^n)\)
space~\cite{MicciancioVoulgaris13}. Aggarwal, Dadush, Regev, and Stephens-Davidowitz used discrete Gaussian sampling to reduce
the randomized bound to \(2^{n+o(n)}\) time and
space~\cite{ADRS15}. Recently, several improvements to
randomized classical exact-SVP algorithms have yielded progressively smaller
time exponents, culminating in \(2^{n/2+o(n)}\) time, all using
\(2^{n/2+o(n)}\)
space~\cite{GaoFengHu26Super,Hhan26,GaoFengHu26BDGL,Hhan26CosetTree}.
Under heuristic assumptions, Becker, Ducas, Gama and
Laarhoven proposed an algorithm with running time \(2^{0.292n+o(n)}\) and also using
exponential space~\cite{BDGL16}.
These results therefore remain in the exponential-space setting.

Other work seeks to speed up lattice enumeration itself.  Chen et al.\
give a classical time--space tradeoff~\cite{ChenEtAl25}.  At its
balanced point, both time and space are
\(n^{\frac{n}{4e}+o(n)}\).  Aono, Nguyen, and Shen use quantum
enumeration to reach the same time bound with polynomial
space~\cite{AonoNguyenShen18}.
This raises the question:
\begin{center}
    \itshape
    Can SVP be solved classically in
    $n^{\frac{n}{4e}+o(n)}$ time using only polynomial space?
\end{center}

The answer is yes.  We give a randomized classical algorithm that,
given any full-rank integer basis \(\mat B\), returns a shortest nonzero
lattice vector with probability at least \(2/3\).  It uses polynomial
space and runs in \(n^{\frac{n}{4e}+o(n)}\) time up to a
polynomial factor in the input bit length.  The precise statement appears
in \cref{thm:main}.

The key observation concerns Kannan's recursion: it ultimately needs
only one shortest vector, even though enumeration searches the entire
ball.  For a quasi-HKZ basis, we replace this search with the low-space
collision procedure of Lyu and Zhu~\cite{LyuZhu23}.  An exact sampler
provides access to two large random arrays of
lattice points without storing them
explicitly, and the quasi-HKZ shape bound controls their length.  With
constant probability, the two arrays contain many pairs whose difference
is a fixed shortest vector.  For sufficiently many of these pairs,
a random grid assigns the two lattice points in each pair to the same cell.  The
corresponding array entries are then labeled identically and thus form
collisions for the low-space search.  Bounds on repeated samples
and on the number of points in a grid cell control the total number of
collisions.  Repeating the grid construction and collision search
\(2^{O(n)}\) times gives constant success probability.  The resulting
square-root saving over exhaustive enumeration reduces the
exponent from \(1/(2e)\) to \(1/(4e)\).  Finally, embedding this
subroutine in Kannan's recursion extends the method to arbitrary
full-rank integer bases.

Our use of many representations is conceptually reminiscent of the representation technique introduced by Howgrave-Graham and Joux for subset sum~\cite{HGJ10} and further developed by Becker, Coron, and Joux~\cite{BCJ11}. In their subset sum setting, a fixed solution is given many combinatorial representations so that one representation survives suitable modular filters. In our SVP setting, a fixed shortest lattice vector has many representations as a difference of two array entries, and a random grid turns some of these representations into label collisions for the low-space search.

\subsubsection*{Organization.}
\Cref{sec:preliminaries} develops the two tools used throughout: shape
bounds for quasi-HKZ bases and a low-space collision lemma.
\Cref{sec:sampler} uses rounded Gram--Schmidt sampling and a random grid
to construct an implicit label array for the collision search.
\Cref{sec:solver} applies this construction first to a quasi-HKZ basis
and then, through Kannan's recursion, to an arbitrary full-rank lattice basis. \Cref{sec:conclusion} summarizes the result and discusses directions for
future work.

\section{Preliminaries}
\label{sec:preliminaries}
We write \([m]:=\{1,\ldots,m\}\) and use
\(\langle\cdot,\cdot\rangle\) and \(\norm{\cdot}\) for the Euclidean
inner product and norm, respectively.  For a rational matrix \(\mat B\), let
\(|\mat B|\) denote its binary encoding length.  The
symbols \(\log\) and \(\log_2\) denote the natural and binary
logarithms, respectively.  Throughout the recursion, \(n\) is fixed
and \(d\le n\) is the current rank.  Projected bases lie in
\(\mathbb Q^n\) and are scaled uniformly to integer bases when needed.

\subsection{Lattices}

Let \(\mat B=[\vct b_1\ \cdots\ \vct b_d]\in\mathbb R^{n\times d}\) have linearly
independent columns.  The rank-\(d\) lattice generated by the basis
\(\mat B\) is defined as
\[
 \cL(\mat B):=\{\mat B\vct x:\vct x\in\Z^d\}.
\]
We write \(\cL=\cL(\mat B)\) when the basis is clear. The length of its
nonzero shortest vector is denoted by
\[
 \lambda_1(\cL):=
 \min_{\vct v\in\cL\setminus\{0\}}\norm{\vct v}.
\]

\begin{definition}
Given a basis \(\mat B\), the shortest vector problem (SVP) asks for a vector
\(\vct v\in\cL(\mat B)\setminus\{0\}\) satisfying
\(\norm{\vct v}=\lambda_1(\cL(\mat B))\).
\end{definition}

Given $\mat B$, let \(\vct b_1^*,\ldots,\vct b_d^*\) be its
Gram--Schmidt orthogonalization.  Set
\[
 r_i:=\norm{\vct b_i^*}\quad\text{for }i\in[d],
 \]
 and
 \[
 \mu_{j,i}:=\frac{\langle \vct b_j,\vct b_i^*\rangle}
 {\norm{\vct b_i^*}^2}\quad\text{for }i<j.
\]
Then, for \(j\in[d]\),
\[
 \vct b_j=\vct b_j^*+\sum_{i<j}\mu_{j,i}\vct b_i^*.
\]
For \(i\in[d]\), let \(\pi_i\) denote the orthogonal projection onto
\(\operatorname{span}(\vct b_1,\ldots,\vct b_{i-1})^\perp\), with \(\pi_1\) the
identity.  Then \(\vct b_i^*=\pi_i(\vct b_i)\), and
\[
 \pi_i(\cL)
 =\cL\bigl(\pi_i(\vct b_i),\ldots,\pi_i(\vct b_d)\bigr)
\]
is the \(i\)th \emph{projected lattice} of \(\mat B\).

For \(\vct x\in\Z^d\) and \(i\in[d]\), set
\[
 c_i(\vct x):=\sum_{j>i}\mu_{j,i}x_j,
 \qquad y_i(\vct x):=x_i+c_i(\vct x).
 \]
Then \(\mat B\vct x=\sum_i y_i(\vct x)\vct b_i^*\), so \(y_i(\vct x)\) is the
coefficient of \(\vct b_i^*\) in \(\mat B\vct x\).  Orthogonality gives
\[
 \norm{\mat B\vct x}^2=\sum_i r_i^2y_i(\vct x)^2.
\]

The basis \(\mat B\) is \emph{size reduced} if
\(|\mu_{j,i}|\le1/2\) for all \(i<j\).

We next introduce the definitions of HKZ and quasi-HKZ bases~\cite{HanrotStehle07}.

\begin{definition}\label{def:quasi-hkz}
A size-reduced basis is \emph{HKZ reduced} if \(\vct b_i^*\) is a shortest
vector of \(\pi_i(\cL)\) for every \(i\in[d]\).  For \(d\ge2\), a
size-reduced basis is \emph{quasi-HKZ} if \(r_2\ge\norm{\vct b_1}/2\) and the
projected basis \([\pi_2(\vct b_2)\ \cdots\ \pi_2(\vct b_d)]\) is HKZ
reduced.
\end{definition}

The following Hanrot--Stehl\'e product bound will be used
below~\cite[Theorem~3]{HanrotStehle07}.

\begin{lemma}\label{lem:hkz-product}
Let \(\rho_1,\ldots,\rho_m\) be the Gram--Schmidt lengths
of an HKZ-reduced basis of rank \(m\).  For every
nonempty \(I\subseteq[m]\), writing \(k:=|I|\),
\[
 \frac{\rho_1^k}{\prod_{i\in I}\rho_i}\le m^{\frac{k}{2}(1+\log(m/k))}\le m^{\frac{m}{2e}+k/2}.
\]
\end{lemma}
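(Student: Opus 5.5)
The first inequality is quoted from \cite[Theorem~3]{HanrotStehle07}, so we take \(\rho_1^k/\prod_{i\in I}\rho_i\le m^{\frac{k}{2}(1+\log(m/k))}\) as given. That result rests on a Minkowski-type induction: each \(\rho_i\) is the first minimum of a rank-\((m-i+1)\) HKZ-reduced projected lattice, and the Gram--Schmidt products are controlled by Minkowski's bound \(\rho_i\le\sqrt{m-i+1}\,(\prod_{j\ge i}\rho_j)^{1/(m-i+1)}\). We do not reprove it. The work left is the purely analytic second inequality
\[
 m^{\frac{k}{2}(1+\log(m/k))}\le m^{\frac{m}{2e}+k/2}.
\]

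Since \(m\ge1\), the map \(t\mapsto m^t\) is nondecreasing. It therefore suffices to compare exponents, i.e.\ to show
\[
 \tfrac{k}{2}+\tfrac{k}{2}\log(m/k)\le \tfrac{m}{2e}+\tfrac k2 .
\]
The two \(k/2\) terms cancel. This leaves \(k\log(m/k)\le m/e\) for \(1\le k\le m\). (When \(m=1\) both sides equal \(1\).)

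Write \(x:=k/m\in(0,1]\). The claim becomes \(x\log(1/x)\le 1/e\), i.e.\ \(-x\log x\le 1/e\). Set \(f(x)=-x\log x\). Then \(f'(x)=-\log x-1\), which vanishes only at \(x=1/e\). Also \(f''(x)=-1/x<0\), so \(f\) is concave and attains its maximum there, with value \(f(1/e)=1/e\). Hence \(k\log(m/k)=m\,f(k/m)\le m/e\), which gives the exponent inequality and therefore the displayed bound.

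There is no real obstacle once the Hanrot--Stehl\'e estimate is assumed. The only care needed is to use the natural logarithm consistently, as the paper's convention specifies, since the constant \(1/e\) depends on it. Equality in the second bound is approached when \(k\approx m/e\), which is why the worst case over \(I\) produces the exponent \(\frac{m}{2e}\).
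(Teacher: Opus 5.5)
Your proposal is correct and takes essentially the paper's approach. The paper gives no proof and simply cites Hanrot--Stehl\'e's Theorem~3 for the first inequality. Your reduction of the second inequality to \(k\log(m/k)\le m/e\), that is, \(-x\log x\le 1/e\) on \((0,1]\), is exactly the elementary step the paper leaves implicit.
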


For the shape bounds below, assume \(d\ge2\).  Write
\(\lambda_1:=\lambda_1(\cL)\), and set
\[
 s_i:=\max\left\{1,\frac{\norm{\vct b_1}}{\sqrt d\,r_i}\right\}
 \quad(i\in[d]),\qquad P:=\prod_{i=1}^d s_i.
\]

Now we introduce the shape bounds for quasi-HKZ bases in the following \cref{lem:hkz-shape}, which bounds these scale parameters and compares \(\norm{\vct b_1}\) with
\(\lambda_1\).

\begin{lemma}\label{lem:hkz-shape}
For a quasi-HKZ basis of rank \(d\ge2\),
\[
 P\le 2^{O(d)}d^{\frac{d}{2e}}=d^{\frac{d}{2e}+o(d)},
 \qquad \max_{i\in[d]}s_i=d^{o(d)},
 \qquad \norm{\vct b_1}\le2\lambda_1.
\]
\end{lemma}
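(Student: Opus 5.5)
I would prove the three claims in reverse order, since the comparison of \(\norm{\vct b_1}\) with \(\lambda_1\) is the simplest and feeds into the other two.

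\emph{Third claim.} Take a shortest nonzero \(\vct v=\mat B\vct x\) and let \(i\) be the largest index with \(x_i\ne0\). Then \(y_i(\vct x)=x_i\) is a nonzero integer, so \(\norm{\vct v}\ge r_i\). If \(i=1\), then \(\vct v\) is a nonzero multiple of \(\vct b_1\) and \(\lambda_1\ge\norm{\vct b_1}\). If \(i\ge2\), then \(\pi_2(\vct v)\) is a nonzero vector of \(\pi_2(\cL)\). Since the projected basis is HKZ reduced, \(r_2=\lambda_1(\pi_2(\cL))\le\norm{\pi_2(\vct v)}\le\lambda_1\). Combined with \(r_2\ge\norm{\vct b_1}/2\), this gives \(\norm{\vct b_1}\le2\lambda_1\).

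\emph{Bound on \(\max_i s_i\).} We have \(s_1=1\), because \(r_1=\norm{\vct b_1}\). For \(i\ge2\), apply \cref{lem:hkz-product} to the HKZ basis of \(\pi_2(\cL)\), which has rank \(m=d-1\) and Gram--Schmidt lengths \(\rho_j=r_{j+1}\), with \(I=\{i-1\}\) and \(k=1\). This yields \(r_2/r_i\le m^{\frac12(1+\log m)}=d^{O(\log d)}=d^{o(d)}\). Since \(\norm{\vct b_1}\le2r_2\), we get \(s_i\le\max\{1,2r_2/(\sqrt d\,r_i)\}=d^{o(d)}\).

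\emph{Bound on \(P\).} Let \(J=\{i\ge2: s_i>1\}\). Then
\[
P=\prod_{i\in J}\frac{\norm{\vct b_1}}{\sqrt d\,r_i}\le\prod_{i\in J}\frac{2r_2}{\sqrt d\,r_i}=\Bigl(\frac{2}{\sqrt d}\Bigr)^{k}\frac{r_2^k}{\prod_{i\in J}r_i},
\]
where \(k=|J|\). If \(J\) is empty, then \(P=1\). Otherwise apply \cref{lem:hkz-product} with \(m=d-1\) and the shifted index set, which gives
\[
P\le 2^k d^{-k/2}\, m^{\frac{m}{2e}+k/2}\le 2^d d^{\frac{d}{2e}}.
\]
Here we use \(m\le d\) and \(m^{k/2}\le d^{k/2}\), so the factor \(d^{-k/2}\) cancels the \(k/2\) term.

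The main obstacle is exactly this cancellation. The crude second form of \cref{lem:hkz-product} carries an extra \(m^{k/2}\), which could be as large as \(d^{d/2}\). It is absorbed precisely by the \(\sqrt d\) normalization built into \(s_i\), so the point is to keep that factor visible rather than dropping it. The remaining steps are direct applications of the definitions together with the one-element case of \cref{lem:hkz-product}.
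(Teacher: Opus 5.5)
Your proof is correct and follows the paper's argument essentially step for step. You apply \cref{lem:hkz-product} to the HKZ-reduced projected basis with \(m=d-1\): once with the shifted index set, where the \(\sqrt d\) normalization cancels the \(m^{k/2}\) term, and once with \(I=\{i-1\}\) for \(\max_i s_i\). For the third claim you use \(\norm{\vct v}\ge\norm{\pi_2(\vct v)}\ge r_2\ge\norm{\vct b_1}/2\) whenever \(\pi_2(\vct v)\neq0\). The differences are cosmetic: you detect \(\pi_2(\vct v)\neq0\) through the largest nonzero coefficient rather than non-parallelism to \(\vct b_1\). Also, the first two bounds use \(\norm{\vct b_1}\le2r_2\) from the quasi-HKZ definition rather than the third claim, so the reverse ordering is harmless but not needed.
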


\begin{proof}
Let \(S:=\{i:s_i>1\}\).  Since \(s_1=1\), we have
\(S\subseteq\{2,\ldots,d\}\).  If \(S=\varnothing\), then \(P=1\).
Otherwise, set \(k:=|S|\).  Applying \cref{lem:hkz-product}
with \(m=d-1\) and \(\rho_j=r_{j+1}\) to the HKZ-reduced
basis \([\pi_2(\vct b_2)\ \cdots\ \pi_2(\vct b_d)]\) and the index set
\(\{i-1:i\in S\}\) gives
\[
 \frac{r_2^k}{\prod_{i\in S}r_i}
 \le (d-1)^{(d-1)/(2e)+k/2}.
\]
Using \(\norm{\vct b_1}\le2r_2\), we obtain
\begin{align*}
 P&=\frac{\norm{\vct b_1}^k}{d^{k/2}\prod_{i\in S}r_i}\\
  &\le \frac{2^k}{d^{k/2}}
       (d-1)^{(d-1)/(2e)+k/2}
   \le 2^d d^{\frac{d}{2e}}.
\end{align*}
For each \(i\in\{2,\ldots,d\}\), the same lemma applied to the
projected basis with \(I=\{i-1\}\) gives
\[
 \frac{r_2}{r_i}\le(d-1)^{\frac12(1+\log(d-1))}.
\]
Together with \(\norm{\vct b_1}\le2r_2\) and \(s_1=1\), this yields
\[
 \max_{i\in[d]}s_i
 \le2d^{(\log d)/2}=d^{O(\log d)}=d^{o(d)}.
\]

Finally, let \(0\ne\vct v\in\cL\).  If \(\vct v\) is parallel to \(\vct b_1\),
then \(\vct v=t\vct b_1\) for some \(t\in\Z\setminus\{0\}\), so
\(\norm{\vct v}\ge\norm{\vct b_1}\).  Otherwise,
\(\pi_2(\vct v)\in\pi_2(\cL)\setminus\{0\}\).  Since \(\vct b_2^*\) is shortest in
\(\pi_2(\cL)\),
\[
 \norm{\vct v}\ge\norm{\pi_2(\vct v)}\ge r_2\ge\norm{\vct b_1}/2.
\]
Thus \(\lambda_1\ge\norm{\vct b_1}/2\), which proves \(\norm{\vct b_1}\le2\lambda_1\).
\end{proof}

\subsection{Low-space collision search}

Our algorithm uses a low-space collision procedure on an implicitly
represented array.  This idea goes back to Floyd's
cycle-finding algorithm~\cite{Knuth69} and its use in the
Pollard--\(\rho\) method~\cite{Pollard75} to find a collision among
successive iterates.  Beame, Clifford, and Machmouchi~\cite{BCM13} adapted
this approach to array collisions in the random-oracle model.
Chen, Jin, Williams, and Wu~\cite{CJWW22} instantiated this procedure
with a short-seed pseudorandom hash family, thereby removing its
random-oracle requirement in the polylogarithmic-space regime.  Lyu
and Zhu~\cite{LyuZhu23} subsequently simplified and
strengthened the analysis.

To define a collision walk, fix an array \(\vct u\in[N]^L\) and a hash
\(h:[N]\to[L]\cup\{\bot\}\), and set
\[
 f_h(i):=h(u_i)\qquad(i\in[L]).
\]
Starting from \(x_0:=s\in[L]\), iterate
\[
 x_0\longrightarrow x_1=f_h(x_0)
 \longrightarrow x_2=f_h(x_1)\longrightarrow\cdots .
\]
The walk stops at \(\bot\) or the first repeated index.
The set of visited indices is the \emph{orbit} of the walk,
denoted by \(\mathcal O(h,s)\).  If
\(u_i=u_j\) for distinct \(i,j\in\mathcal O(h,s)\), then
\(f_h(i)=f_h(j)\).  Because the walk stops at the first repeated
index, \(\{i,j\}\) is the only pair of visited indices with the same
successor.  A cycle finder~\cite[Theorem~2.1]{BCM13}
recovers this pair using \(O(|\mathcal O(h,s)|)\) array accesses and
\(O(\log L)\) additional bits, without storing the orbit.  It outputs
the recovered pair only if its two array values agree.

The \emph{second frequency moment} of \(\vct u\) is
\[
 \Ftwo(\vct u):=\sum_{a\in[N]}|\{i\in[L]:u_i=a\}|^2.
\]

We use the following orbit bounds of
Lyu and Zhu~\cite[Lemmas~3 and~4]{LyuZhu23}.

\begin{lemma}\label{lem:lyu-zhu-orbit}
Let \(\vct u\in[N]^L\), where \(N=L^{O(1)}\).  There is a distribution
on hashes \(h:[N]\to[L]\cup\{\bot\}\) with \(O(\log^3 L)\)-bit seeds
and \(\poly(\log L)\)-time evaluation.  Let \(h\) be
sampled from this distribution, and let \(s\) be chosen independently
and uniformly from \([L]\).
For distinct \(i,j\in[L]\),
\begin{align*}
 \Pr_{h,s}[i\in\mathcal O(h,s)]
   &=O(1/\sqrt L),\\
 \Pr_{h,s}[\{i,j\}\subseteq\mathcal O(h,s)]
   &=\Omega\bigl(1/\Ftwo(\vct u)\bigr).
\end{align*}
\end{lemma}

A \emph{duplicate pair} is an index pair \(\{i,j\}\) with
\(i\ne j\) and \(u_i=u_j\).  The orbit bounds control the expected
length of a walk and its chance of finding a fixed duplicate pair.
Repeating the walk gives a search bound for any set \(\cG\) of such pairs.

\begin{lemma}\label{lem:collision-search}
Let \(\vct u\) be a random-access array of length \(L\) over a universe of
size \(L^{O(1)}\).  Assume that
\begin{enumerate}[label=(\roman*),nosep]
\item \(\Ftwo(\vct u)\le p\le L^2\);
\item \(\cG\) is a set of duplicate pairs with \(|\cG|\ge r_0\ge1\).
\end{enumerate}
There is a randomized algorithm that, given random access to \(\vct u\)
and the bounds \(p,r_0\), outputs a sequence of duplicate pairs.
The sequence contains a pair in \(\cG\) with constant probability.
On every run, the algorithm makes
\[
 \widetilde O\left(\sqrt L\,p/r_0\right)
\]
array accesses and uses \(O(\log^3 L)\) working bits in addition
to any space needed to evaluate an entry \(u_i\).
\end{lemma}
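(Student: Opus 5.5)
The algorithm repeats the Lyu--Zhu collision walk of \cref{lem:lyu-zhu-orbit} independently \(T\) times and outputs, in order, every duplicate pair recovered by the cycle finder. The choice of \(T\) follows from the two orbit bounds. The first bound, summed over indices, gives
\[
 \mathbb E_{h,s}\bigl[|\mathcal O(h,s)|\bigr]=\sum_{i\in[L]}\Pr_{h,s}[i\in\mathcal O(h,s)]=O(\sqrt L).
\]
For each fixed pair \(\{i,j\}\in\cG\), the second bound gives \(\Pr[\{i,j\}\subseteq\mathcal O(h,s)]=\Omega(1/\Ftwo(\vct u))=\Omega(1/p)\). When both entries of a duplicate pair lie on the orbit, they share a successor. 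Since the walk stops at the first repeated index, this pair is exactly the one the cycle finder recovers, and it is output because the values agree.

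A single walk recovers at most one pair, so the events ``walk recovers \(\{i,j\}\)'' are disjoint over \(\{i,j\}\in\cG\). Summing over \(\cG\) gives
\[
 \Pr[\text{one walk outputs a pair in }\cG]=\Omega(|\cG|/p)=\Omega(r_0/p).
\]
Set \(T:=\lceil C p/r_0\rceil\) for a suitable constant \(C\). Then \(T\) independent walks fail to output a pair in \(\cG\) with probability at most \((1-\Omega(r_0/p))^T\le e^{-1}\), which is the constant success probability claimed.

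The cost bound must hold on \emph{every} run, and here I see the only real obstacle: the orbit length is bounded only in expectation. I would truncate each walk after \(K:=\lceil c\sqrt L\rceil\) steps and discard truncated walks. Markov's inequality gives \(\Pr[|\mathcal O|>K]\le 1/4\) for large \(c\). The loss from truncation is controlled by
\[
 \Pr[\{i,j\}\subseteq\mathcal O,\ |\mathcal O|\le K]\ge\Pr[\{i,j\}\subseteq\mathcal O]-\Pr[|\mathcal O|>K].
\]
This is not immediately enough, because \(\Omega(1/p)\) is tiny compared with \(1/4\). The better route is to sum before subtracting. Summed over \(\cG\), the success events are disjoint with total probability \(\Omega(r_0/p)\), and I want the truncated walks to retain a constant fraction of this mass.

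To get this, I would look at the proof in \cite{LyuZhu23}. There, the pair-hitting event is lower bounded by exhibiting specific short walks, of length \(O(\sqrt L)\) or \(O(L/\sqrt{\Ftwo})\le O(\sqrt L)\) up to constants once \(p\ge L\) is arranged. This means \(\Pr[\{i,j\}\subseteq\mathcal O,\ |\mathcal O|\le K]=\Omega(1/p)\) directly for suitable \(c\). If that reading fails, the fallback is a conditional argument: conditioned on hitting a specific pair, the expected orbit length is still \(O(\sqrt L)\). One can also simply note that \(\Ftwo(\vct u)\ge L\), so \(p\ge L\), which keeps the setting within the regime where the Lyu--Zhu analysis is stated.

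With truncation in place, each walk costs \(O(K)\) array accesses, two passes of the BCM cycle finder, and \(\poly(\log L)\) hash time per step. The total is \(T\cdot O(\sqrt L)\cdot\poly(\log L)=\widetilde O(\sqrt L\,p/r_0)\) accesses. The working memory consists of the \(O(\log^3 L)\)-bit hash seed, which is resampled per walk, the counters for \(T\) and \(K\) at \(O(\log L)\) bits each since \(p\le L^2\), and the cycle finder's \(O(\log L)\) bits. This gives \(O(\log^3 L)\) bits overall, plus whatever space is needed to evaluate an entry \(u_i\).
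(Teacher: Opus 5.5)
\textbf{Gap: per-walk truncation.} Your walk, the bound $\mathbb E|\mathcal O(h,s)|=O(\sqrt L)$, the disjointness argument and the choice $T=\Theta(p/r_0)$ all match the paper. The gap is in how you get a cost bound that holds on every run.

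Truncating each walk at $K=\lceil c\sqrt L\rceil$ steps needs the joint bound $\Pr[\{i,j\}\subseteq\mathcal O(h,s),\ |\mathcal O(h,s)|\le K]=\Omega(1/p)$. \Cref{lem:lyu-zhu-orbit} gives only the two marginal bounds. As you note yourself, subtracting $\Pr[|\mathcal O|>K]\le 1/4$ from a per-walk success mass of $\Omega(r_0/p)$ yields nothing. None of your three remedies closes this:
\begin{itemize}
\item The ``short witness walks'' reading of the Lyu--Zhu proof is a conjecture about material outside the stated lemma.
\item The claim that the orbit length stays $O(\sqrt L)$ in expectation, conditioned on hitting a fixed pair, is asserted without proof.
\item The fact $p\ge\Ftwo(\vct u)\ge L$ says nothing about where the hitting mass sits.
\end{itemize}
From the marginals alone, all the probability of reaching a pair in $\cG$ could lie on the rare long walks. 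Raising $K$ does not help either. Making $\Pr[|\mathcal O|>K]$ small relative to $r_0/p$ forces $K\approx\sqrt L\,p/r_0$, and the worst-case total becomes $\sqrt L\,(p/r_0)^2$.

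\textbf{Fix: truncate globally.} Let $A$ be the total number of array accesses of all $T$ untruncated walks. By linearity, $\mathbb E[A]=O(T\sqrt L)=O(\sqrt L\,p/r_0)$. Let $\eta$ be a constant lower bound on the untruncated success probability, for example your $1-e^{-1}$. Stop the whole procedure after $C\sqrt L\,p/r_0$ accesses. By Markov's inequality, $\Pr[A>C\sqrt L\,p/r_0]\le\eta/2$ for large $C$. On the complementary event truncation changes nothing, so the success probability is still at least $\eta/2$.

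The subtraction now works because both quantities are compared over the whole $T$-walk experiment, where the success probability is already constant. This is exactly the paper's argument. The rest of your proof, including the space accounting, goes through unchanged; the single total-access counter needs only $O(\log L)$ bits since $p/r_0\le L^2$.
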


\begin{proof}
Sample \(h\) and \(s\) as in \cref{lem:lyu-zhu-orbit}
and run the cycle finder described above.
Summing the first bound of
\cref{lem:lyu-zhu-orbit} over \(i\in[L]\) gives
\[
 \mathbb E_{h,s}[|\mathcal O(h,s)|]=O(\sqrt L).
\]
Thus each trial uses \(O(\sqrt L)\) expected array accesses.
By the second bound, each
fixed duplicate pair is reached and thus output with probability
\(\Omega(1/\Ftwo(\vct u))\).  A single orbit contains at most one
duplicate pair, so the output events for pairs in \(\cG\) are disjoint.
Thus one trial outputs a pair in \(\cG\) with probability
\[
 \Omega\bigl(|\cG|/\Ftwo(\vct u)\bigr)\ge\Omega(r_0/p).
\]

Run \(\Theta(p/r_0)\) independent trials and output every duplicate
pair found.  The resulting sequence contains a pair in \(\cG\) with
constant probability, and its expected total array-access count is
\(O(\sqrt L\,p/r_0)\).  Let \(A\) be the total array-access count
when all these trials run to completion.  Let \(\eta>0\) be a success
lower bound.  For
\(T:=C\sqrt L\,p/r_0\) with sufficiently large \(C\), Markov's inequality
gives
\[
 \Pr[A>T]\le\frac{\mathbb E[A]}{T}\le\eta/2.
\]
Thus, truncation at \(T\) accesses yields success probability
at least \(\eta/2\) and a worst-case access bound \(T\).  The trial
counter uses \(O(\log L)\) bits since
\(p/r_0\le L^2\), and the hash seed and orbit state use
\(O(\log^3 L)\) bits.
\end{proof}

The walk may also output pairs outside \(\cG\).  We may check
each output pair with a polynomial-space predicate and retain the best
accepted pair, while continuing the prescribed trials.  The collision
search needs only \(\vct u,p,r_0\).  The set \(\cG\) is
used only in the success analysis.

\section{Sampling and grid collisions}
\label{sec:sampler}
To apply the collision search of
\cref{lem:collision-search} to a quasi-HKZ basis, we construct two
implicit arrays of lattice samples and assign a short label to each
sample.  The collision search is run on their concatenated label array:
its \(t\)th entry is the label of the \(t\)th sample and is
computed only when queried.  We need many equal-label pairs whose
underlying samples differ by a fixed shortest vector.  At the same time, the label
array's second frequency moment, which counts ordered pairs of positions
with equal labels, must remain small.  This section gives such a
construction.

Fix a quasi-HKZ integer basis \(\mat B\) of rank \(d\ge2\), write
\(\cL:=\cL(\mat B)\) and
\(\lambda_1:=\lambda_1(\cL)\), and let \(s_i,P\) be the shape parameters from
\cref{sec:preliminaries}.  Our starting point is a finitely supported
distribution \(\nu\) on \(\cL\) with two properties.  First, every lattice
point should have small probability.  Second, \(\nu\) should
have enough overlap with its translate by every lattice vector
\(\mat B\vct z\) satisfying
\(\norm{\mat B\vct z}\le\norm{\vct b_1}\).  This range contains every
shortest vector because \(\lambda_1\le\norm{\vct b_1}\).  For
independent samples \(\vct V\) and \(\vct W\) drawn from \(\nu\), the overlap is
\[
 \sum_{\vct u\in\cL}\nu(\vct u)\nu(\vct u-\mat B\vct z)
 =\Pr[\vct V-\vct W=\mat B\vct z].
\]
The overlap bound gives many pairs, with one sample from each of two
arrays, whose difference is a shortest vector.  The point-mass bound
limits repeated sample values.

The construction proceeds in four steps.  First,
\cref{subsec:rounded-gs} introduces rounded Gram--Schmidt
coordinates and bounds their change under a short lattice translation
in a weighted norm.  Second, \cref{subsec:sampling-distribution}
defines a finite product distribution on these coordinates that can be
sampled exactly and has enough overlap with every short translate.
Third, \cref{subsec:representations} uses short random seeds to define
two implicit sample arrays.  The two distribution bounds give many
pairs with the desired difference while controlling repeated sample
values.  Finally, \cref{subsec:grid-collisions} uses a random grid to
assign both points of many target pairs the same cell label.  A packing argument
converts the bound on repeated sample values into a bound on the label
array's second frequency moment, and a final hash compresses the labels into
the range required by \cref{lem:collision-search}.

\subsection{Rounded Gram--Schmidt coordinates}
\label{subsec:rounded-gs}

We introduce rounded Gram--Schmidt coordinates and prove that they
define a bijection of \(\Z^d\).  Unlike the original coefficient
coordinates, whose change need not be controlled by
\(\norm{\mat B\vct z}\), these coordinates give a bounded weighted
coordinate change for every short lattice translation.  This bound
underlies the overlap analysis below.

Define \(\lfloor t\rceil:=\lfloor t+1/2\rfloor\).  Then
\(\lfloor t+k\rceil=\lfloor t\rceil+k\) for \(t\in\mathbb R\) and
\(k\in\Z\).  For \(\vct x\in\Z^d\) and \(i\in[d]\), define
\[
\begin{aligned}
 \kappa_i(\vct x)&:=x_i+\lfloor c_i(\vct x)\rceil,\\
 \kappa(\vct x)&:=(\kappa_1(\vct x),\ldots,\kappa_d(\vct x)).
\end{aligned}
\]
Thus \(\kappa_i(\vct x)=\lfloor y_i(\vct x)\rceil\).  We call
\(\kappa(\vct x)\) the \emph{rounded Gram--Schmidt coordinates}
of \(\mat B\vct x\).

We round each scale \(s_i\) up to a power of two so that the coordinate
weights can be computed exactly.  Let \(a_i\) be the least power of two
satisfying
\[
 a_i\ge1,\qquad
 a_i^2 d r_i^2\ge \norm{\vct b_1}^2.
\]
Set
\[
 \alpha_i:=a_i^{-2},\qquad \Pi:=\prod_i a_i,\qquad Q:=\max_i a_i.
\]
These quantities are exactly computable from the rational squared
Gram--Schmidt lengths, and they satisfy
\begin{equation}\label{eq:rounded-scales}
s_i\le a_i<2s_i,\qquad
 P\le\Pi<2^dP,\qquad
Q<2\max_{i\in[d]}s_i=d^{o(d)},
\end{equation}
and
\begin{equation}\label{eq:alpha-domination}
 \alpha_i\le\min\left\{1,\frac{d r_i^2}{\norm{\vct b_1}^2}\right\}.
\end{equation}
The next lemma proves that \(\kappa\) is bijective and bounds the
coordinate change induced by a short lattice translation.

\begin{lemma}\label{lem:rounded-gs}
The map \(\kappa:\Z^d\to\Z^d\) is a bijection.  For fixed
\(\vct z\in\Z^d\), define \(T_{\vct z}:\Z^d\to\Z^d\) and its
coordinate changes by
\[
\begin{aligned}
 T_{\vct z}(\vct k)&:=\kappa(\kappa^{-1}(\vct k)+\vct z),\\
 \delta_i(\vct k)&:=T_{\vct z}(\vct k)_i-k_i
 \qquad(\vct k\in\Z^d,\ i\in[d]).
\end{aligned}
\]
For every \(\vct k\in\Z^d\) and \(i\in[d]\),
\[
 |\delta_i(\vct k)-y_i(\vct z)|\le1.
\]
If \(\norm{\mat B\vct z}\le \norm{\vct b_1}\), then
\[
 \sum_i\alpha_i\delta_i(\vct k)^2\le4d.
\]
\end{lemma}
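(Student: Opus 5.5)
Since \(c_i(\vct x)\) depends only on \(x_{i+1},\ldots,x_d\), the map \(\kappa\) is triangular with unit diagonal and integer shifts. I will prove bijectivity by back substitution. Given \(\vct k\in\Z^d\), I solve for \(x_d,x_{d-1},\ldots,x_1\) in that order. First \(x_d=k_d\), because \(c_d\equiv0\). Once \(x_{i+1},\ldots,x_d\) are fixed, the value \(\lfloor c_i(\vct x)\rceil\) is a determined integer, and \(x_i=k_i-\lfloor c_i(\vct x)\rceil\) is forced. This shows \(\kappa\) is both injective and surjective, so \(T_{\vct z}\) is well defined.

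For the pointwise bound, I set \(\vct x:=\kappa^{-1}(\vct k)\), so that \(k_i=\lfloor y_i(\vct x)\rceil\) and \(T_{\vct z}(\vct k)_i=\lfloor y_i(\vct x+\vct z)\rceil\). The coordinates \(y_i\) are linear in \(\vct x\), because \(c_i\) is linear. Hence \(y_i(\vct x+\vct z)=y_i(\vct x)+y_i(\vct z)\). Each rounding moves a real number by at most \(1/2\), since \(|\lfloor t\rceil-t|\le1/2\). Therefore
\[
 |\delta_i(\vct k)-y_i(\vct z)|
 \le \bigl|\lfloor y_i(\vct x+\vct z)\rceil-y_i(\vct x+\vct z)\bigr|
 +\bigl|y_i(\vct x)-\lfloor y_i(\vct x)\rceil\bigr|\le1.
\]

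For the weighted bound, I write \(\delta_i^2\le(|y_i(\vct z)|+1)^2\le2y_i(\vct z)^2+2\) and then split the sum into two parts. The first part uses \eqref{eq:alpha-domination}, namely \(\alpha_i\le d r_i^2/\norm{\vct b_1}^2\):
\[
 \sum_i 2\alpha_i y_i(\vct z)^2\le\frac{2d}{\norm{\vct b_1}^2}\sum_i r_i^2y_i(\vct z)^2
 =\frac{2d\,\norm{\mat B\vct z}^2}{\norm{\vct b_1}^2}\le2d.
\]
The second part uses \(\alpha_i\le1\), which gives \(\sum_i2\alpha_i\le2d\). Adding the two parts yields \(\sum_i\alpha_i\delta_i(\vct k)^2\le4d\).

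The argument is mostly routine. The only step needing care is linearity, \(y_i(\vct x+\vct z)=y_i(\vct x)+y_i(\vct z)\), which holds because the \(\mu_{j,i}\) are fixed. I also need to use the two-sided estimate \(\alpha_i\le\min\{1,\ldots\}\) from \eqref{eq:alpha-domination} for both terms, rather than only one side of it.
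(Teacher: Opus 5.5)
Your proposal is correct and follows the paper's proof essentially step for step: back substitution for bijectivity, linearity of \(y_i\) (equivalently of \(c_i\)) plus the \(1/2\) rounding error on each side for the pointwise bound, and the split \(\delta_i^2\le 2y_i(\vct z)^2+2\) combined with both halves of \eqref{eq:alpha-domination} for the weighted bound. The paper phrases the rounding error as \(\theta(c_i(\cdot))\) rather than \(\theta(y_i(\cdot))\), but these coincide because \(x_i\) is an integer.
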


\begin{proof}
Fix \(\vct k\in\Z^d\).  Since \(c_i(\vct x)\) depends only on
\(x_{i+1},\ldots,x_d\), define \(\vct x\) by backward substitution:
\[
 x_i:=k_i-\left\lfloor
 \sum_{j=i+1}^d\mu_{j,i}x_j\right\rceil,
 \qquad i=d,d-1,\ldots,1.
\]
The resulting \(\vct x\in\Z^d\) satisfies \(\kappa(\vct x)=\vct k\), so \(\kappa\)
is surjective.  Every preimage of \(\vct k\) must satisfy the same recursion,
so it is unique and \(\kappa\) is also injective.

Let \(\vct x\) denote this preimage and set
\(\theta(t):=\lfloor t\rceil-t\), so
\(|\theta(t)|\le1/2\).
Using \(c_i(\vct x+\vct z)=c_i(\vct x)+c_i(\vct z)\) and
\(y_i(\vct z)=z_i+c_i(\vct z)\), we obtain
\[
\begin{aligned}
 \delta_i(\vct k)-y_i(\vct z)
 &=\kappa_i(\vct x+\vct z)-\kappa_i(\vct x)-y_i(\vct z)\\
 &=\lfloor c_i(\vct x)+c_i(\vct z)\rceil
    -\lfloor c_i(\vct x)\rceil-c_i(\vct z)\\
 &=\theta(c_i(\vct x)+c_i(\vct z))-\theta(c_i(\vct x)).
\end{aligned}
\]
Thus \(|\delta_i(\vct k)-y_i(\vct z)|\le1\).

Now suppose that \(\norm{\mat B\vct z}\le \norm{\vct b_1}\).  By
Equation~\eqref{eq:alpha-domination},
\[
\begin{aligned}
 \sum_i\alpha_i y_i(\vct z)^2
 &\le\frac{d}{\norm{\vct b_1}^2}\sum_i r_i^2y_i(\vct z)^2\\
 &=\frac{d}{\norm{\vct b_1}^2}\norm{\mat B\vct z}^2
 \le d.
\end{aligned}
\]
Also, \(\sum_i\alpha_i\le d\).  Combining these estimates with
\(|\delta_i(\vct k)-y_i(\vct z)|\le1\) gives
\[
\begin{aligned}
 \sum_i\alpha_i\delta_i(\vct k)^2
 &\le2\sum_i\alpha_i y_i(\vct z)^2
     +2\sum_i\alpha_i\bigl(\delta_i(\vct k)-y_i(\vct z)\bigr)^2\\
 &\le2d+2\sum_i\alpha_i
 \le4d.
\end{aligned}
\]
\end{proof}

\subsection{Constructing the sampling distribution}
\label{subsec:sampling-distribution}
We now construct \(\nu\) as a product distribution in the rounded
Gram--Schmidt coordinates, using Gaussian-like weights in each
coordinate.  By \cref{lem:rounded-gs}, translation by a short lattice
vector changes these coordinates by only a bounded amount in the
weighted norm, so the product distribution keeps enough overlap with
every such translate.  We truncate each coordinate distribution to
obtain finite support.  We then approximate each one by a distribution
that can be sampled exactly from a fixed number of random bits.  The
resulting distribution \(\nu\) has a small maximum point mass and a
uniform lower bound on its overlap with every short translate.
\Cref{thm:sampler-autocorrelation} gives the precise bounds, which will be used to
analyze the sample arrays in \cref{subsec:representations}.

For coordinate \(i\), truncate the support and round the weight
\(2^{-\alpha_i k^2}\) down to an integer power of two.  Set
\[
 I_i:=\{k\in\Z:\alpha_i k^2\le16d\},\qquad
 e_i(k):=\lceil\alpha_i k^2\rceil,\qquad
 w_i(k):=2^{-e_i(k)}.
\]
The set \(I_i\) is the finite support for coordinate \(i\).  Normalize
the weights by setting
\[
 Z_i:=\sum_{k\in I_i}w_i(k),\qquad
 p_i(k):=\frac{w_i(k)}{Z_i}\quad(k\in I_i).
\]

We first bound \(Z_i\), \(|I_i|\), and the smallest point mass of
\(p_i\).

\begin{lemma}\label{lem:coordinate-normalizer}
For each \(i\in[d]\),
\[
 \frac{a_i}{4}\le Z_i\le4a_i,\qquad
 |I_i|\le9\sqrt d\,a_i,\qquad
 \min_{k\in I_i}p_i(k)\ge\frac{2^{-16d}}{4a_i}.
\]
\end{lemma}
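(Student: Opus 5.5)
The plan is to compare each sum with a one-dimensional Gaussian integral in the variable \(k\), using \(\alpha_i=a_i^{-2}\) with \(a_i\ge1\) a power of two. The key identity is \(w_i(k)=2^{-\lceil\alpha_ik^2\rceil}\in[\tfrac12 2^{-\alpha_ik^2},\,2^{-\alpha_ik^2}]\). This reduces everything to bounds on \(G:=\sum_{k\in I_i}2^{-k^2/a_i^2}\) and on the full sum \(\sum_{k\in\Z}2^{-k^2/a_i^2}\).

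\emph{Size of the support.} The set \(I_i\) consists of the integers with \(|k|\le 4\sqrt d\,a_i\). Hence
\[
|I_i|\le 8\sqrt d\,a_i+1\le 9\sqrt d\,a_i,
\]
since \(\sqrt d\,a_i\ge1\).

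\emph{Upper bound on \(Z_i\).} Since \(t\mapsto 2^{-t^2/a_i^2}\) is unimodal, the standard comparison gives
\[
\sum_{k\in\Z}2^{-k^2/a_i^2}\le 1+\int_{\mathbb R}2^{-t^2/a_i^2}\,dt=1+a_i\sqrt{\pi/\log 2}.
\]
As \(\sqrt{\pi/\log2}\approx2.13\) and \(a_i\ge1\), this is at most \(4a_i\), and \(Z_i\le G\) is bounded by the same quantity.

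\emph{Lower bound on \(Z_i\).} Here we need \(G\ge a_i/2\), which gives \(Z_i\ge G/2\ge a_i/4\). I will not use the integral, since truncation and discreteness need care when \(a_i\) is small. Instead I restrict to \(|k|\le a_i\), which lies inside \(I_i\) because \(1\le16d\). On that range each term is at least \(2^{-1}\), and there are \(2a_i+1\) such integers since \(a_i\) is an integer. This gives
\[
G\ge (2a_i+1)/2\ge a_i,
\]
and hence \(Z_i\ge a_i/2\ge a_i/4\). This step is the one to watch, because a careless integral comparison with truncation is exactly where constants slip. The elementary counting argument avoids that, and relies on \(a_i\) being an integer power of two with \(a_i\ge1\).

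\emph{Minimum point mass.} For \(k\in I_i\) we have \(\alpha_ik^2\le16d\), so \(e_i(k)\le\lceil16d\rceil=16d\) and \(w_i(k)\ge2^{-16d}\). Dividing by \(Z_i\le4a_i\) gives
\[
p_i(k)\ge \frac{2^{-16d}}{4a_i}.
\]
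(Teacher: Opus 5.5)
Your proof is correct and follows the paper's argument. You count \(I_i\) the same way, use the same Gaussian-integral comparison for \(Z_i\le4a_i\), and bound the minimum point mass through \(w_i(k)\ge2^{-16d}\). The only difference is in the lower bound: you count the window \(|k|\le a_i\) rather than the paper's \(|k|\le a_i/4\). This gives the stronger bound \(Z_i\ge a_i/2\), and even \(Z_i\ge a_i\) if you use \(w_i(k)\ge1/2\) directly instead of passing through \(G\). It also needs only that \(a_i\) is a positive integer, not the power-of-two structure.
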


\begin{proof}
Since \(\alpha_i=a_i^{-2}\), a direct count gives
\[
 |I_i|=2\lfloor4\sqrt d\,a_i\rfloor+1\le9\sqrt d\,a_i.
\]
Moreover, \(e_i(k)\le16d\) and \(w_i(k)\ge2^{-16d}\) for
\(k\in I_i\).  Since \(w_i(k)\le2^{-k^2/a_i^2}\), comparison with the
Gaussian integral gives
\[
 Z_i\le\sum_{k\in\Z}2^{-k^2/a_i^2}
 \le1+a_i\sqrt{\pi/\log2}<4a_i.
\]
For the lower bound, the fact that \(a_i\) is a power of two gives
\[
 \bigl|\{k\in\Z:|k|\le a_i/4\}\bigr|
 =2\lfloor a_i/4\rfloor+1\ge a_i/2.
\]
For each such \(k\), we have \(\alpha_i k^2\le1/16\), so
\(k\in I_i\), \(e_i(k)\le1\), and \(w_i(k)\ge1/2\).  Hence
\[
 Z_i\ge\sum_{|k|\le a_i/4}w_i(k)\ge\frac{a_i}{4}.
\]
The lower bound on \(p_i(k)\) follows from \(w_i(k)\ge2^{-16d}\)
and \(Z_i\le4a_i\).
\end{proof}

We next round the probabilities of \(p_i\) so that a coordinate can be
sampled from a fixed-length random bit string.  The construction assigns
an integer number of bit strings to each \(k\in I_i\), while preserving
all point masses within a constant factor.  Set
\[
\begin{aligned}
 \omega_i(k)&:=2^{16d}w_i(k)=2^{16d-e_i(k)}\in\Z_{\ge1}
 \quad(k\in I_i),\\
 \Omega_i&:=\sum_{k\in I_i}\omega_i(k)=2^{16d}Z_i,
 \qquad
 \ell_i:=\lceil\log_2(2\Omega_i)\rceil.
\end{aligned}
\]
For each \(k\in I_i\), first set
\[
 \widetilde m_i(k)
 :=\left\lfloor\frac{2^{\ell_i}\omega_i(k)}{\Omega_i}\right\rfloor,
 \qquad
 D_i:=2^{\ell_i}-\sum_{k\in I_i}\widetilde m_i(k).
\]
As a sum of \(|I_i|\) fractional parts,
\(0\le D_i<|I_i|\).  It counts the random inputs left unassigned by
the rounded-down counts.  Thus
the following adjustment is well defined.  Increase
\(\widetilde m_i(k)\) by one for the first \(D_i\) elements of \(I_i\) in
increasing order.  Denote the resulting counts by \(m_i(k)\).  The
rounded probabilities are
\[
 \widehat p_i(k):=\frac{m_i(k)}{2^{\ell_i}}
 \qquad(k\in I_i).
\]

The next lemma gives the pointwise approximation guarantee for
\(\widehat p_i(k)\) and the complexity of sampling from \(\widehat p_i\).

\begin{lemma}\label{lem:coordinate-sampler}
For each \(i\in[d]\), the distribution \(\widehat p_i\) satisfies
\[
 \frac12p_i(k)\le\widehat p_i(k)\le\frac32p_i(k)
 \qquad(k\in I_i).
\]
A sample from \(\widehat p_i\) uses \(\ell_i=O(d+\log a_i)\)
random bits.  It can be generated with two
passes over \(I_i\) in \(d^{o(d)}\poly(|\mat B|)\) time and polynomial space.
\end{lemma}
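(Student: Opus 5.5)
The plan has three parts: the pointwise approximation bound, then the bit count, then the sampler's cost. For the approximation, the key quantity is the exact scaled mass \(q(k):=2^{\ell_i}\omega_i(k)/\Omega_i=2^{\ell_i}p_i(k)\). Since \(\widetilde m_i(k)=\lfloor q(k)\rfloor\) and the adjustment adds at most one, we have \(q(k)-1< m_i(k)\le q(k)+1\). It therefore suffices to show \(q(k)\ge2\) for every \(k\in I_i\). Then \(m_i(k)\ge q(k)-1\ge q(k)/2\) and \(m_i(k)\le q(k)+1\le\tfrac32q(k)\), and dividing by \(2^{\ell_i}\) gives the claim.

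To get \(q(k)\ge2\), use \(\omega_i(k)\ge1\) (as \(e_i(k)\le16d\) on \(I_i\)) and \(2^{\ell_i}\ge2\Omega_i\). These give \(q(k)\ge2\omega_i(k)\ge2\). A small point to check is that the strict inequality \(m_i(k)>q(k)-1\) is compatible with the case \(q(k)=2\) exactly; the bound \(m_i(k)\ge\lfloor q(k)\rfloor\ge q(k)-1\) suffices, so no issue arises. I expect this step, choosing the right threshold \(q\ge2\) and tracking the \(\pm1\) from flooring and the adjustment, to be the only genuinely delicate part, and it is short.

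For the bit count, \(\ell_i=\lceil\log_2(2\cdot2^{16d}Z_i)\rceil\le16d+2+\log_2(4a_i)\) by \cref{lem:coordinate-normalizer}, which is \(O(d+\log a_i)\). For the sampler, draw \(\ell_i\) uniform bits and read them as an integer \(R\in[0,2^{\ell_i})\). The output is the \(k\in I_i\) whose cumulative interval \([\sum_{k'<k}m_i(k'),\sum_{k'\le k}m_i(k'))\) contains \(R\), taking \(I_i\) in increasing order; this has exactly the law \(\widehat p_i\).

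The first pass over \(I_i\) computes \(\Omega_i\) exactly as an integer sum of powers of two, and from it \(\ell_i\) and \(D_i\), using \(D_i=2^{\ell_i}-\sum\widetilde m_i(k)\). The second pass recomputes each \(\widetilde m_i(k)\) and adds one when the index is among the first \(D_i\), accumulating prefix sums until \(R\) is located. Only running sums and counters are stored. All of these are integers of \(O(d+\log a_i)\) bits, and each \(e_i(k)=\lceil k^2/a_i^2\rceil\) is computable exactly since \(a_i\) is a power of two, so the space is polynomial.

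The time is \(O(|I_i|)\) iterations of \(\poly(d,\log a_i)\) arithmetic each. By \cref{lem:coordinate-normalizer}, \(|I_i|\le9\sqrt d\,a_i\), and \(a_i\le Q=d^{o(d)}\) by \eqref{eq:rounded-scales}. Computing \(a_i\) itself from the rational \(r_i^2\) costs \(\poly(|\mat B|,d)\). The total is therefore \(d^{o(d)}\poly(|\mat B|)\).
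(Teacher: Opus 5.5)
Your pointwise bound, via \(q(k)=2^{\ell_i}p_i(k)\ge2\omega_i(k)\ge2\) and \(|m_i(k)-q(k)|\le1\), and your bit count are correct and match the paper. The one step that fails as written is the claim that the first pass yields \(D_i\) through \(D_i=2^{\ell_i}-\sum_k\widetilde m_i(k)\). Each \(\widetilde m_i(k)=\lfloor2^{\ell_i}\omega_i(k)/\Omega_i\rfloor\) depends on \(\Omega_i\) and \(\ell_i\), and these are known only after the first pass has finished. If you store only running sums, you cannot accumulate \(\sum_k\widetilde m_i(k)\) during that pass. Your procedure therefore needs an extra pass to compute \(D_i\) before the sampling pass, so it uses three passes in total. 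That still gives the \(d^{o(d)}\poly(|\mat B|)\) time and polynomial space bounds, but it does not prove the stated ``two passes.''

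The missing observation, which the paper uses, is that \(\omega_i(k)\), and hence \(\widetilde m_i(k)\), depends on \(k\) only through \(e_i(k)\in\{0,\ldots,16d\}\). In the first pass, record the \(16d+1\) multiplicities \(N_{i,e}:=|\{k\in I_i:e_i(k)=e\}|\). After the pass, compute \(\Omega_i=\sum_eN_{i,e}2^{16d-e}\), then \(\ell_i\), and then \(D_i=2^{\ell_i}-\sum_eN_{i,e}\lfloor2^{\ell_i+16d-e}/\Omega_i\rfloor\). All of this uses only \(O(d\ell_i)\) stored bits. Your second pass then works exactly as you describe.
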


\begin{proof}
Since \(\omega_i(k)/\Omega_i=p_i(k)\), the construction gives
\[
\begin{gathered}
 \sum_{k\in I_i}m_i(k)=2^{\ell_i},\qquad
 \widetilde m_i(k)=\lfloor2^{\ell_i}p_i(k)\rfloor,\\
 m_i(k)\in\{\widetilde m_i(k),\widetilde m_i(k)+1\}.
\end{gathered}
\]
Thus \(\lvert m_i(k)-2^{\ell_i}p_i(k)\rvert\le1\).  Also,
\(2^{\ell_i}\ge2\Omega_i\) and \(\omega_i(k)\ge1\) give
\[
 2^{\ell_i}p_i(k)
 =\frac{2^{\ell_i}\omega_i(k)}{\Omega_i}\ge2.
\]
Therefore
\[
 \lvert\widehat p_i(k)-p_i(k)\rvert
 =2^{-\ell_i}\lvert m_i(k)-2^{\ell_i}p_i(k)\rvert
 \le2^{-\ell_i}\le\frac12p_i(k),
\]
which proves the pointwise bounds.

For \(e\in\{0,\ldots,16d\}\), let
\[
 N_{i,e}:=|\{k\in I_i:e_i(k)=e\}|.
\]
A first pass over \(k\in I_i\) computes these \(16d+1\)
multiplicities, and hence
\[
\begin{aligned}
 \Omega_i&=\sum_{e=0}^{16d}N_{i,e}2^{16d-e},\\
 D_i&=2^{\ell_i}-\sum_{e=0}^{16d}N_{i,e}
 \left\lfloor\frac{2^{\ell_i}2^{16d-e}}{\Omega_i}\right\rfloor.
\end{aligned}
\]
Choose \(U\) uniformly from
\(\{0,\ldots,2^{\ell_i}-1\}\).  During a second pass over
\(k\in I_i\) in increasing order, output the least \(k\in I_i\) such that
\[
 U<\sum_{u\in I_i,\,u\le k}m_i(u).
\]
Exactly \(m_i(k)\) values of \(U\) produce \(k\), so the output has
distribution \(\widehat p_i\).

By \cref{lem:coordinate-normalizer},
\[
 \Omega_i=2^{16d}Z_i\le2^{16d+2}a_i,
 \qquad \ell_i=O(d+\log a_i).
\]
Each pass has \(|I_i|\le9\sqrt d\,Q=d^{o(d)}\) iterations by
Equation~\eqref{eq:rounded-scales}.  Each iteration uses \(O(1)\) exact
arithmetic operations on integers of \(O(\ell_i+\log a_i)\) bits.  Since
\(\log a_i\le\log Q=o(d\log d)\) and \(d\le|\mat B|\), these operations cost
\(\poly(|\mat B|)\) bit operations.  Thus the two passes take
\(d^{o(d)}\poly(|\mat B|)\) time.  The values \(N_{i,e}\) and
the cumulative sum use \(O(d\ell_i)\) bits.  All remaining stored data
use \(\poly(|\mat B|)\) bits.
\end{proof}

Let \(\widehat\nu:=\bigotimes_{i=1}^d\widehat p_i\) be the
product distribution on \(\prod_i I_i\).  Define the bijection
\(\Phi:\Z^d\to\cL\) and the induced distribution \(\nu\) by
\[
 \Phi(\vct k):=\mat B\kappa^{-1}(\vct k),\qquad
 \nu(\Phi(\vct k)):=\widehat\nu(\vct k)=\prod_i\widehat p_i(k_i).
\]
Their supports are
\[
 \operatorname{supp}(\widehat\nu)=\prod_i I_i,
 \qquad \operatorname{supp}(\nu)=\Phi\Bigl(\prod_i I_i\Bigr).
\]

The following theorem bounds the point masses of \(\nu\) and its
overlap under every short lattice translation.

\begin{theorem}
\label{thm:sampler-autocorrelation}
The distribution \(\nu\) satisfies
\[
 \|\nu\|_\infty\le\frac{2^{3d}}{\Pi}.
\]
Let \(\vct V\) and \(\vct W\) be independent samples from \(\nu\).
For every \(\vct z\in\Z^d\) satisfying
\(\norm{\mat B\vct z}\le\norm{\vct b_1}\),
\[
 \rho_{\vct z}:=\Pr[\vct V-\vct W=\mat B\vct z]
 =\sum_{\vct v\in\cL}\nu(\vct v)\nu(\vct v-\mat B\vct z)
 \ge\frac{2^{-20d}}{\Pi}.
\]
\end{theorem}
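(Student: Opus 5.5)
Both bounds reduce to statements about the product distribution \(\widehat\nu\) on \(\Z^d\), because \(\Phi\) is a bijection (\cref{lem:rounded-gs}). For the point-mass bound, fix \(\vct k\in\prod_i I_i\). By \cref{lem:coordinate-sampler} and \cref{lem:coordinate-normalizer},
\[
 \widehat p_i(k_i)\le\tfrac32p_i(k_i)=\tfrac32\frac{w_i(k_i)}{Z_i}\le\tfrac32\cdot\frac{4}{a_i}=\frac{6}{a_i},
\]
using \(w_i\le1\). Multiplying over \(i\) gives \(\nu(\Phi(\vct k))\le6^d/\Pi\). Since \(6^d\le8^d=2^{3d}\), this is the first claim.

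For the overlap, write \(\vct v=\Phi(\vct k)\). Then \(\vct v-\mat B\vct z=\Phi(\vct k')\), where \(\vct k=T_{\vct z}(\vct k')\), so
\[
 \rho_{\vct z}=\sum_{\vct k'}\widehat\nu(T_{\vct z}(\vct k'))\,\widehat\nu(\vct k').
\]
The map \(T_{\vct z}\) is a bijection of \(\Z^d\). The key step is to restrict the sum to a large set of \(\vct k'\) on which both \(\vct k'\) and \(\vct k=\vct k'+\delta(\vct k')\) lie in the support \(\prod_iI_i\) and both have substantial mass. Write \(\vct k=\vct k'+\vct\delta\). Then \(\sum_i\alpha_i\delta_i^2\le4d\) by \cref{lem:rounded-gs}, since \(\norm{\mat B\vct z}\le\norm{\vct b_1}\).

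I would take the restriction set
\[
 G:=\Bigl\{\vct k':\sum_i\alpha_ik_i'^2\le d\Bigr\}.
\]
For \(\vct k'\in G\), each coordinate satisfies \(\alpha_ik_i'^2\le d\le16d\). Moreover
\[
 \alpha_ik_i^2\le2\alpha_ik_i'^2+2\alpha_i\delta_i^2\le2d+8d\le16d,
\]
so both points lie in the support. For the masses, use \(w_i(k)\ge2^{-1-\alpha_ik^2}\) together with \(\widehat p_i\ge\tfrac12p_i\) and \(Z_i\le4a_i\). This gives
\[
 \widehat\nu(\vct k)\ge\frac{2^{-\sum_i(2+\alpha_ik_i^2)}\,8^{-d}}{\Pi}\ge\frac{2^{-5d-10d}}{\Pi}
\]
for \(\vct k\) (using \(\sum_i\alpha_ik_i^2\le10d\)), and a similar bound of \(2^{-6d}/\Pi\) for \(\vct k'\). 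The product of the two masses is then about \(2^{-21d}/\Pi^2\). To land exactly at \(2^{-20d}\), I would tune the constants using the sharper estimate
\[
 \sum_i\alpha_i(k_i'+\delta_i)^2\le\Bigl(\sqrt{\textstyle\sum_i\alpha_ik_i'^2}+\sqrt{4d}\Bigr)^2 .
\]
With \(G\) taken as \(\sum_i\alpha_ik_i'^2\le d/4\) or similar, this bound is at most \(6.25d\).

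The main obstacle is the remaining step: showing \(|G|\ge2^{-O(d)}\Pi\), with a constant that fits in the final \(2^{-20d}\). I expect \(G\) to contain the box \(\{|k_i'|\le a_i/(2\cdot2)\}\), or more precisely \(|k_i'|\le a_i/4\). On this box \(\alpha_ik_i'^2\le1/16\), so \(\sum_i\alpha_ik_i'^2\le d/16\). By the counting in \cref{lem:coordinate-normalizer}, the box has at least \(\prod_ia_i/2=2^{-d}\Pi\) points. Using this box directly as the restriction set avoids any volume argument.

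Then, for \(\vct k'\) in the box:
\begin{itemize}
\item \(\widehat p_i(k_i')\ge\tfrac12\cdot\tfrac12\cdot\tfrac1{4a_i}\), so \(\widehat\nu(\vct k')\ge2^{-4d}/\Pi\);
\item \(\sum_i\alpha_ik_i^2\le(\sqrt{d/16}+2\sqrt d)^2\le5.07d\), so \(\widehat\nu(\vct k)\ge2^{-5.07d-d-4d}/\Pi\ge2^{-10.1d}/\Pi\).
\end{itemize}
Summing over the box gives
\[
 \rho_{\vct z}\ge2^{-d}\Pi\cdot\frac{2^{-4d}}{\Pi}\cdot\frac{2^{-10.1d}}{\Pi}\ge\frac{2^{-20d}}{\Pi}.
\]
I will verify these constants carefully.
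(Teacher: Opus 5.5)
Your proof is correct and is essentially the paper's argument. It uses the same point-mass bound \(6^d/\Pi\le2^{3d}/\Pi\), and for the overlap it restricts to the same box \(\{|k_i|\le a_i/4\}\) of size at least \(2^{-d}\Pi\), transported by \(T_{\vct z}\) via the weighted bound \(\sum_i\alpha_i\delta_i^2\le4d\) of Lemma~\ref{lem:rounded-gs}; the only differences are cosmetic: you apply the triangle inequality in the weighted norm, obtaining \((1/4+2)^2d\) in place of the paper's \(65d/8\) from \((a+b)^2\le2a^2+2b^2\), which leaves extra slack (about \(2^{-15.1d}\)), and your exploratory set \(G\) is unnecessary.
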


\begin{proof}
By \cref{lem:coordinate-normalizer}, \(Z_i\ge a_i/4\).  Since
\(w_i(k)\le1\), \(p_i(k)=w_i(k)/Z_i\le4/a_i\) for \(k\in I_i\).  By
\cref{lem:coordinate-sampler}, \(\widehat p_i(k)\le(3/2)p_i(k)\le6/a_i\).
Since \(\Phi\) is a bijection,
\[
 \|\nu\|_\infty=\|\widehat\nu\|_\infty
 \le\frac{6^d}{\Pi}\le\frac{2^{3d}}{\Pi}.
\]

Fix \(\vct z\in\Z^d\) with \(\norm{\mat B\vct z}\le \norm{\vct b_1}\), and set
\[
\begin{aligned}
 \cC&:=\{\vct k\in\Z^d:|k_i|\le a_i/4\text{ for all }i\},\\
 |\cC|&=\prod_{i=1}^d\bigl(2\lfloor a_i/4\rfloor+1\bigr)
 \ge\frac{\Pi}{2^d}\ge\frac{\Pi}{4^d}.
\end{aligned}
\]
For \(\vct k\in\cC\), \(\alpha_i k_i^2\le1/16\) for every \(i\).  Hence
\(\cC\subseteq\prod_i I_i\) and
\(\sum_i\alpha_i k_i^2\le d/16\).  Let
\(\vct k':=T_{\vct z}(\vct k)\).
Writing \(\delta_i(\vct k)=k_i'-k_i\), \cref{lem:rounded-gs} gives
\(\sum_i\alpha_i\delta_i(\vct k)^2\le4d\).  Therefore
\[
 \sum_i\alpha_i(k_i')^2
 \le2\sum_i\alpha_i k_i^2+2\sum_i\alpha_i\delta_i(\vct k)^2
 \le\frac d8+8d=\frac{65d}{8}<16d.
\]
Thus \(\vct k'=T_{\vct z}(\vct k)\in\prod_{i=1}^d I_i\).

Since \(e_i(t)=\lceil\alpha_i t^2\rceil\le\alpha_i t^2+1\),
\[
\begin{aligned}
 \sum_i e_i(k_i)
 &\le\sum_i\alpha_i k_i^2+d
 \le\frac{17d}{16}<2d,\\
 \sum_i e_i(k_i')
 &\le\sum_i\alpha_i(k_i')^2+d
 \le\frac{73d}{8}<10d.
\end{aligned}
\]
By \cref{lem:coordinate-normalizer,lem:coordinate-sampler}, for
\(t\in I_i\),
\[
 \widehat p_i(t)\ge\frac{p_i(t)}{2}
 =\frac{2^{-e_i(t)}}{2Z_i}\ge\frac{2^{-e_i(t)}}{8a_i}.
\]
Consequently,
\[
\begin{aligned}
 \widehat\nu(\vct k)
 &\ge\frac{2^{-3d-\sum_i e_i(k_i)}}{\Pi}
 \ge\frac{2^{-5d}}{\Pi},\\
 \widehat\nu(\vct k')
 &\ge\frac{2^{-3d-\sum_i e_i(k_i')}}{\Pi}
 \ge\frac{2^{-13d}}{\Pi}.
\end{aligned}
\]
Since \(\kappa\) and \(\vct x\mapsto\vct x+\vct z\) are
bijections, \(T_{\vct z}\) is bijective.  Moreover,
\(\Phi(T_{\vct z}(\vct k))-\Phi(\vct k)=\mat B\vct z\).  Hence
\[
 \rho_{\vct z}
 \ge\sum_{\vct k\in\cC}\widehat\nu(\vct k)\widehat\nu(T_{\vct z}(\vct k))
 \ge\frac{\Pi}{4^d}
      \frac{2^{-5d}}{\Pi}\frac{2^{-13d}}{\Pi}
 =\frac{2^{-20d}}{\Pi}.
\]
\end{proof}

Generating one lattice point according to the distribution \(\nu\) uses
\[
 \ell:=\sum_i\ell_i=O(d^2+\log\Pi)
\]
random bits.  Split these bits into consecutive blocks of lengths
\(\ell_1,\ldots,\ell_d\), and use the \(i\)th block to generate
\(k_i\) according to \(\widehat p_i\) as in
\cref{lem:coordinate-sampler}.  For
\(\vct k=(k_1,\ldots,k_d)\), recover \(\vct x=\kappa^{-1}(\vct k)\) by
backward substitution.
Size reduction gives
\[
 |x_i|\le|k_i|+\frac12\sum_{j>i}|x_j|+\frac12,
\]
so each coefficient has \(O(d+\log Q+\log d)\) bits.  Combining
\cref{lem:coordinate-sampler} with this coefficient bound, computing the
resulting lattice point \(\Phi(\vct k)\) takes
\(d^{o(d)}\poly(|\mat B|)\) time and polynomial space.

\subsection{Many representations of a shortest vector}
\label{subsec:representations}
We now turn the two bounds on \(\nu\) into high-probability properties
of two sample arrays \(\vct X\) and \(\vct Y\).  For the analysis, fix a
shortest vector \(\vct v^*\).  The arrays are generated without knowing
\(\vct v^*\).  The overlap lower bound makes the number \(Z\) of pairs
satisfying \(\vct X_i-\vct Y_j=\vct v^*\) large with high probability.
The maximum point-mass bound limits repeated sample values and hence
controls \(F_{\rm pt}\), the second frequency moment of the concatenated
sample array.  \Cref{thm:many-representations} gives the lower bound on
\(Z\), and \cref{thm:point-frequency} gives the upper bound on
\(F_{\rm pt}\).  We then show how a short random seed for each array
lets any entry be recomputed on demand.

Set
\begin{equation}\label{eq:implicit-array-length}
 m:=2^{64n}\Pi.
\end{equation}

Let
\[
 \begin{aligned}
  \vct X&:=(\vct X_1,\ldots,\vct X_m)\in\cL^m,\\
  \vct Y&:=(\vct Y_1,\ldots,\vct Y_m)\in\cL^m
 \end{aligned}
\]
be independent random arrays.  Within each array, the entries are
pairwise independent with marginal distribution \(\nu\).

Write \(\vct v^*=\mat B\vct z^*\), so
\(\norm{\vct v^*}\le \norm{\vct b_1}\).  Concatenate the two arrays as
\[
 \vct U:=(\vct X_1,\ldots,\vct X_m,
           \vct Y_1,\ldots,\vct Y_m).
\]
Define the \emph{target-pair count} \(Z\) and the second frequency
moment \(F_{\rm pt}\) by
\[
\begin{aligned}
 Z&:=\sum_{i,j\in[m]}\mathbf1[\vct X_i-\vct Y_j=\vct v^*],\\
 F_{\rm pt}&:=\Ftwo(\vct U)
 =\sum_{\vct x\in\cL}
 \bigl|\{t\in[2m]:\vct U_t=\vct x\}\bigr|^2.
\end{aligned}
\]
The quantity \(Z\) counts representations of \(\vct v^*\) as
\(\vct X_i-\vct Y_j\), while \(F_{\rm pt}\) sums the squared multiplicities of
lattice points in \(\vct U\).

\begin{theorem}
\label{thm:many-representations}
For all sufficiently large \(n\), the two arrays satisfy
\[
 Z\ge\frac{m^2\,2^{-20d}}{2\Pi}
\]
with probability at least \(1-2^{-\Omega(n)}\).
\end{theorem}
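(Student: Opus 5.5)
The plan is a second-moment (Chebyshev) argument on \(Z\), using only the pairwise independence within each array and the independence between \(\vct X\) and \(\vct Y\). Write \(Z=\sum_{i,j}A_{ij}\) with \(A_{ij}:=\mathbf1[\vct X_i-\vct Y_j=\vct v^*]\).

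First, bound the mean. Since \(\vct X_i\) and \(\vct Y_j\) are independent with marginal \(\nu\), we have \(\mathbb E[A_{ij}]=\rho_{\vct z^*}\). Because \(\norm{\mat B\vct z^*}=\lambda_1\le\norm{\vct b_1}\), \cref{thm:sampler-autocorrelation} gives \(\rho_{\vct z^*}\ge2^{-20d}/\Pi\). Hence
\[
\mu:=\mathbb E[Z]=m^2\rho_{\vct z^*}\ge m^2 2^{-20d}/\Pi .
\]
It therefore suffices to show that \(\Pr[Z<\mu/2]\le4\operatorname{Var}(Z)/\mu^2=2^{-\Omega(n)}\).

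Second, expand \(\operatorname{Var}(Z)=\sum_{(i,j),(i',j')}\operatorname{Cov}(A_{ij},A_{i'j'})\) and split by index overlap.
\begin{itemize}[nosep]
\item If \(i\ne i'\) and \(j\ne j'\), the four samples are mutually independent: pairwise independence suffices inside each array, and the two arrays are independent. The covariance is then \(0\).
\item If \((i,j)=(i',j')\), the term is at most \(\mathbb E[A_{ij}]\), giving a total of at most \(\mu\).
\item If \(i=i'\) and \(j\ne j'\), then \(\vct Y_j,\vct Y_{j'}\) are independent of each other and of \(\vct X_i\). So
\[
\mathbb E[A_{ij}A_{ij'}]=\sum_{\vct x}\nu(\vct x)\nu(\vct x-\vct v^*)^2\le\|\nu\|_\infty\rho_{\vct z^*}.
\]
There are at most \(m^3\) such terms, contributing at most \(m^3\|\nu\|_\infty\rho_{\vct z^*}=m\|\nu\|_\infty\mu\).
\item The case \(j=j'\), \(i\ne i'\) is symmetric and gives the same bound, with the point mass of the shifted distribution.
\end{itemize}
Summing the cases,
\[
\operatorname{Var}(Z)\le\mu+2m\|\nu\|_\infty\mu .
\]

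Third, plug in the bounds. By \cref{thm:sampler-autocorrelation}, \(\|\nu\|_\infty\le2^{3d}/\Pi\), so \(m\|\nu\|_\infty\le2^{64n+3d}\). Chebyshev then gives
\[
\Pr[Z<\mu/2]\le\frac{4(1+2^{64n+3d+1})}{\mu}\le\frac{2^{64n+3d+4}\,\Pi}{m^2 2^{-20d}} .
\]
Substituting \(m=2^{64n}\Pi\), the right-hand side becomes \(2^{-64n+23d+4}/\Pi\). Since \(\Pi\ge1\) and \(d\le n\), this is at most \(2^{-41n+4}=2^{-\Omega(n)}\) for large \(n\). On the complementary event, \(Z\ge\mu/2\ge m^2 2^{-20d}/(2\Pi)\), which is the claimed bound.

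The only delicate step is the variance decomposition. Here it matters that only pairwise independence is assumed within each array, so triple correlations such as \(\vct X_i,\vct Y_j,\vct Y_{j'}\) must be handled using cross-array independence. In fact we need only pairwise independence of \(\vct Y_j,\vct Y_{j'}\) together with the independence of \(\vct X\) from \(\vct Y\), which is exactly what is available. The choice of the \(2^{64n}\) slack in \(m\) is what makes the Chebyshev bound decay exponentially.
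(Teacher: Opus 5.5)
Your proposal is correct and follows essentially the same route as the paper. Both arguments apply Chebyshev to \(Z\) and split the variance into the diagonal terms, the zero covariances for disjoint endpoints, and the one-shared-endpoint terms bounded by \(\|\nu\|_\infty\rho\), which gives \(\operatorname{Var}[Z]\le m^2\rho+2m^3\|\nu\|_\infty\rho\). Your explicit tail bound \(2^{-64n+23d+4}/\Pi\le2^{-41n+4}\) simply spells out the paper's final \(2^{-\Omega(n)}\) step.
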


\begin{proof}
Set
\[
 z_0:=\frac{m^2\,2^{-20d}}{2\Pi}.
\]
Write \(\rho:=\rho_{\vct z^*}\).  By
\cref{thm:sampler-autocorrelation}, \(\mathbb E[Z]=m^2\rho\ge2z_0\).
For each pair of indices, define
\[
 \xi_{ij}:=\mathbf1[\vct X_i-\vct Y_j=\vct v^*].
\]
Then \(Z=\sum_{i,j\in[m]}\xi_{ij}\), and
\[
\begin{aligned}
 \operatorname{Var}[Z]
 &=\sum_{i,j\in[m]}\operatorname{Var}[\xi_{ij}]\\
 &\quad+\sum_{\substack{(i,j),(i',j')\in[m]^2\\(i,j)\ne(i',j')}}
   \operatorname{Cov}(\xi_{ij},\xi_{i'j'}).
\end{aligned}
\]
The first sum contains the \(m^2\) terms for which the two index pairs
coincide.  Since \(\xi_{ij}\) is Bernoulli with mean \(\rho\),
\[
 \sum_{i,j\in[m]}\operatorname{Var}[\xi_{ij}]
 =m^2\rho(1-\rho)\le m^2\rho.
\]
We divide the covariance terms in the second sum according to their
endpoints.

\emph{Case 1: disjoint endpoints.}
If \(i\ne i'\) and \(j\ne j'\), then \(\vct X_i,\vct X_{i'}\) are independent,
as are \(\vct Y_j,\vct Y_{j'}\).  The two arrays are independent, so these four
variables are jointly independent.  Hence
\(\xi_{ij}\) and \(\xi_{i'j'}\) are independent, and their covariance
is zero.

\emph{Case 2: one shared endpoint.}
If \(i=i'\) and \(j\ne j'\), then
\[
 \mathbb E[\xi_{ij}\xi_{ij'}]
 =\sum_{\vct x\in\cL}\nu(\vct x)\nu(\vct x-\vct v^*)^2
 \le\|\nu\|_\infty\rho.
\]
The case \(j=j'\) and \(i\ne i'\) is symmetric and gives the same
bound.
In either case,
\(\operatorname{Cov}(\xi_{ij},\xi_{i'j'})\le
\|\nu\|_\infty\rho\).  There are \(2m^2(m-1)\) ordered pairs of
indicators sharing exactly one endpoint.  Therefore
\[
 \operatorname{Var}[Z]
 \le m^2\rho+2m^2(m-1)\|\nu\|_\infty\rho
 \le m^2\rho+2m^3\|\nu\|_\infty\rho.
\]
Chebyshev's inequality gives
\[
\begin{aligned}
 \Pr[Z<\tfrac12\mathbb E[Z]]
 &\le\frac{4\operatorname{Var}[Z]}{\mathbb E[Z]^2}\\
 &\le4\left(\frac1{m^2\rho}
       +\frac{2\|\nu\|_\infty}{m\rho}\right)
 \le2^{-\Omega(n)},
\end{aligned}
\]
using \(m/\Pi=2^{64n}\), \(d\le n\), and the bounds on \(\rho\) and
\(\|\nu\|_\infty\) from \cref{thm:sampler-autocorrelation}.
Since \(z_0\le\tfrac12\mathbb E[Z]\), this proves the claim.
\end{proof}

We next bound \(F_{\rm pt}\), the second frequency moment of the sample
points.

\begin{theorem}
\label{thm:point-frequency}
The two arrays satisfy
\[
 F_{\rm pt}
 \le16\left(2m+\frac{4m^2\,2^{3d}}{\Pi}\right)
\]
with probability at least \(15/16\).
\end{theorem}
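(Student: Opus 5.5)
We bound the expectation of \(F_{\rm pt}\) and then apply Markov's inequality with factor \(16\). Write \(F_{\rm pt}\) as a count of ordered pairs of positions with equal entries:
\[
 F_{\rm pt}=\sum_{s,t\in[2m]}\mathbf 1[\vct U_s=\vct U_t]
 =2m+\sum_{s\ne t}\mathbf 1[\vct U_s=\vct U_t].
\]
The diagonal contributes exactly \(2m\), and this is deterministic.

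For the off-diagonal terms we only need pairwise independence. If \(s\ne t\) lie in the same array, the two entries are independent by the pairwise-independence assumption. If they lie in different arrays, they are independent because \(\vct X\) and \(\vct Y\) are independent. In either case both entries have marginal distribution \(\nu\). Hence
\[
 \Pr[\vct U_s=\vct U_t]=\sum_{\vct x\in\cL}\nu(\vct x)^2\le\|\nu\|_\infty\le\frac{2^{3d}}{\Pi},
\]
where the last bound is \cref{thm:sampler-autocorrelation}.

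There are at most \((2m)^2=4m^2\) ordered off-diagonal pairs. Therefore
\[
 \mathbb E[F_{\rm pt}]\le 2m+\frac{4m^2\,2^{3d}}{\Pi}.
\]
Since \(F_{\rm pt}\ge0\), Markov's inequality gives
\[
 \Pr\bigl[F_{\rm pt}>16\,\mathbb E[F_{\rm pt}]\bigr]\le1/16,
\]
which is the stated bound.

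There is no real obstacle here. The only point that needs care is that pairwise independence within each array, together with independence between the arrays, already suffices for every pair of distinct positions. No higher-order independence is needed, because the second moment involves only pairs.
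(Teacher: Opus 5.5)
Your proposal is correct and follows essentially the same route as the paper. The paper likewise splits \(F_{\rm pt}\) into the \(2m\) diagonal terms plus off-diagonal collision indicators, bounds each collision probability by \(\|\nu\|_\infty\le 2^{3d}/\Pi\) using pairwise independence (within an array) and independence (across arrays), counts \(2m(2m-1)\le4m^2\) ordered pairs, and applies Markov's inequality with factor \(16\).
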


\begin{proof}
Set
\[
 f_0:=16\left(2m+\frac{4m^2\,2^{3d}}{\Pi}\right).
\]
By definition,
\[
 F_{\rm pt}
 =\sum_{s,t\in[2m]}\mathbf1[\vct U_s=\vct U_t].
\]
The \(2m\) terms
with \(s=t\) contribute \(2m\).  For \(s\ne t\), pairwise independence
gives
\[
 \Pr[\vct U_s=\vct U_t]
 =\sum_{\vct x\in\cL}\nu(\vct x)^2
 \le\|\nu\|_\infty.
\]
There are \(2m(2m-1)\) ordered pairs with \(s\ne t\).  Hence
\[
\begin{aligned}
 \mathbb E[F_{\rm pt}]
 &=2m+\sum_{\substack{s,t\in[2m]\\s\ne t}}\Pr[\vct U_s=\vct U_t]\\
 &\le2m+2m(2m-1)\|\nu\|_\infty\\
 &\le2m+\frac{4m^2\,2^{3d}}{\Pi}=\frac{f_0}{16}.
\end{aligned}
\]
Markov's inequality gives
\(\Pr[F_{\rm pt}>f_0]\le1/16\), which proves the claim.
\end{proof}

By a union bound, \cref{thm:many-representations,thm:point-frequency}
imply that, for all sufficiently large \(n\), \(Z\ge z_0\) and
\(F_{\rm pt}\le f_0\) hold simultaneously with probability at least
\[
 1-2^{-\Omega(n)}-1/16\ge3/4.
\]

Generating a lattice point according to \(\nu\) uses an \(\ell\)-bit random
string.  To generate the two arrays from short seeds, set
\[
 \bar\ell:=\max\{\ell,\lceil\log_2m\rceil\}
\]
and fix an ordered \(\F_2\)-basis of \(\F_{2^{\bar\ell}}\).
For \(j\in[m]\), let \(t_j\)
be represented by the \(\bar\ell\)-bit encoding of \(j-1\), so \(t_j\)
is computed from \(j\) rather than stored.  For each
\(\vct A\in\{\vct X,\vct Y\}\), choose an independent random affine
polynomial
\[
 h_{\vct A}(t)=a_{\vct A}t+b_{\vct A},
 \qquad a_{\vct A},b_{\vct A}\in\F_{2^{\bar\ell}},
\]
with independent uniform coefficients.  The first \(\ell\)
coordinates of \(h_{\vct A}(t_j)\) provide the random bits used to generate
\(\vct A_j\).  These bit strings are pairwise independent and uniform, so
the entries within each array are pairwise independent with marginal
distribution \(\nu\).  Storing the
two coefficients uses \(2\bar\ell\) bits per array and makes repeated
queries consistent~\cite{CarterWegman79}.  The field construction and
evaluations take deterministic polynomial time and space~\cite{Shoup90}.

\subsection{From short differences to collisions}
\label{subsec:grid-collisions}
The arrays \(\vct X\) and \(\vct Y\) now contain many pairs
with difference \(\vct v^*\), but the collision search detects equal
labels rather than lattice differences.  We bridge this gap by assigning
a label to each array entry in three steps.  A
random signed Hadamard transform spreads a short difference across the
coordinates, and a randomly shifted grid assigns the two lattice
points defining that difference to the same cell with probability at least
\(\gamma=2^{-O(n)}\).  Every
cell has bounded diameter, so a lattice packing bound limits the number
of distinct lattice points with the same cell label.  Combined with
the bound on \(F_{\rm pt}\) from \cref{subsec:representations}, this
controls the second frequency moment of the cell-label
array.  Finally, a pairwise-universal hash maps the cell labels to the
range required by the collision search.  \Cref{thm:grid-label-bounds}
combines these facts: many target pairs receive the same label, while the
second frequency moment of the label array remains small.

We first define the random grid.  Set
\[
 \bar n:=2^{\lceil\log_2n\rceil}\le2n,
\]
and identify \(\Z^n\) with its embedding in \(\Z^{\bar n}\) obtained by
appending \(\bar n-n\) zero coordinates.  Let \(\mat H_{\bar n}\) be the
Sylvester Hadamard matrix.  Choose independent uniform signs
\(\varepsilon_j\in\{-1,1\}\), and set
\[
 \mat D_{\vct{\varepsilon}}
 :=\operatorname{diag}(\varepsilon_1,\ldots,\varepsilon_{\bar n}),
 \qquad
 \mat S:=\mat H_{\bar n}\mat D_{\vct{\varepsilon}}.
\]
Since \(\mat S^{\mathsf T}\mat S=\bar n\mat I_{\bar n}\), every vector
\(\vct v\) with \(\norm{\vct v}\le\norm{\vct b_1}\) satisfies
\[
 \norm{\mat S\vct v}=\sqrt{\bar n}\norm{\vct v}
 \le\sqrt{\bar n}\norm{\vct b_1}.
\]

Choose the least power of two \(W\) satisfying
\begin{equation}\label{eq:grid-width}
 W^2\ge128\norm{\vct b_1}^2\log_2\bar n.
\end{equation}
This condition can be checked exactly because
\(\log_2\bar n=\lceil\log_2n\rceil\) is an integer.
Independently choose a shift \(\vct{\sigma}=(\sigma_1,\ldots,\sigma_{\bar n})\)
whose coordinates are independent and uniform on \(\{0,\ldots,W-1\}\).
Write
\(g:=(\mat D_{\vct{\varepsilon}},\vct{\sigma})\) for the resulting random
grid.  Define the \emph{cell label} of \(\vct x\in\Z^{\bar n}\) as the
integer tuple \(\Cell_g(\vct x)\in\Z^{\bar n}\) with
\[
 \Cell_g(\vct x)_j:=
 \left\lfloor\frac{(\mat S\vct x)_j+\sigma_j}{W}\right\rfloor,
 \qquad j\in[\bar n].
\]
The level sets of \(\Cell_g\) form the cells of the random grid.

The next lemma bounds the collision probability for \(\norm{\vct v}\le \norm{\vct b_1}\)
and the cell diameter.

\begin{lemma}\label{lem:grid-collision}
Let \(\vct x,\vct v\in\Z^{\bar n}\) satisfy
\(\norm{\vct v}\le\norm{\vct b_1}\), and set
\[
 \gamma:=\frac12(3/4)^{\bar n}=2^{-O(n)}.
\]
Then
\[
 \Pr_g[\Cell_g(\vct x)=\Cell_g(\vct x-\vct v)]\ge\gamma.
\]
Moreover, any \(\vct x,\vct y\in\Z^{\bar n}\) in the same cell satisfy
\[
 \norm{\vct x-\vct y}<W.
\]
\end{lemma}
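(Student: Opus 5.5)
Write \(\vct w:=\mat S\vct v\). The cell labels of \(\vct x\) and \(\vct x-\vct v\) agree exactly when, for every coordinate \(j\), the interval between \((\mat S\vct x)_j-w_j+\sigma_j\) and \((\mat S\vct x)_j+\sigma_j\) does not contain a multiple of \(W\). The plan is to first fix the signs \(\vct\varepsilon\) and average over the shift \(\vct\sigma\), and then average over \(\vct\varepsilon\).

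\emph{Fixed signs.} The entries of \(\mat S\vct x\) and \(\vct w\) are integers. Hence, for fixed \(\vct\varepsilon\), the coordinate \(\sigma_j\) uniform on \(\{0,\ldots,W-1\}\) makes \(((\mat S\vct x)_j+\sigma_j)\bmod W\) uniform on \(\{0,\ldots,W-1\}\). Coordinate \(j\) then separates the two points with probability at most \(|w_j|/W\). By independence of the \(\sigma_j\),
\[
 \Pr_{\vct\sigma}[\text{same cell}]\ge\prod_j\max\{0,1-|w_j|/W\}.
\]

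\emph{Averaging over signs.} Each \(w_j=\sum_k H_{jk}\varepsilon_k v_k\) is a Rademacher sum with variance \(\norm{\vct v}^2\le\norm{\vct b_1}^2\). Hoeffding's inequality gives \(\Pr[|w_j|>t]\le2\exp(-t^2/(2\norm{\vct b_1}^2))\). Take \(t:=W/4\). By \eqref{eq:grid-width}, \(t^2\ge8\norm{\vct b_1}^2\log_2\bar n\), so
\[
 \Pr[|w_j|>W/4]\le2\bar n^{-4/\log 2}\le1/(2\bar n)
\]
for \(\bar n\ge2\). A union bound over the \(\bar n\) coordinates shows that, with probability at least \(1/2\) over \(\vct\varepsilon\), every \(|w_j|\le W/4\). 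On that event the product above is at least \((3/4)^{\bar n}\). Combining the two averages gives the bound \(\gamma=\frac12(3/4)^{\bar n}\). The step needing the most care is verifying the constants in this Hoeffding/union-bound estimate. If the constant \(128\) were too small, one would instead use a threshold \(W/c\) with a different \(c\), but \(128\) leaves ample slack. One must also handle the case \(\bar n=1\) (\(n=1\)) separately, either trivially or because \(d\ge2\) forces \(n\ge2\).

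\emph{Diameter.} If \(\vct x,\vct y\) lie in the same cell, then for each \(j\) the values \((\mat S\vct x)_j+\sigma_j\) and \((\mat S\vct y)_j+\sigma_j\) lie in a common half-open interval of length \(W\), so \(|(\mat S(\vct x-\vct y))_j|<W\). Hence \(\norm{\mat S(\vct x-\vct y)}<\sqrt{\bar n}\,W\). Since \(\norm{\mat S\vct u}=\sqrt{\bar n}\norm{\vct u}\), this gives \(\norm{\vct x-\vct y}<W\).
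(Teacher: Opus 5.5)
Your proposal is correct and follows essentially the same route as the paper. Hoeffding plus a union bound shows that, with probability at least $1/2$ over the signs, every coordinate of $\mat S\vct v$ is at most $W/4$ in absolute value; the independent integer shifts then give $(3/4)^{\bar n}$ conditionally; and the diameter bound comes from $|(\mat S(\vct x-\vct y))_j|<W$ together with $\norm{\mat S\vct u}=\sqrt{\bar n}\norm{\vct u}$. Your constants match the paper's (exponent $-W^2/(32\norm{\vct b_1}^2)\le-4\log_2\bar n$), and the paper additionally handles $\vct v=\vct 0$ separately and uses $\bar n\ge2$, which you also note.
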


\begin{proof}
The collision claim is immediate when \(\vct v=0\), so assume
\(\vct v\ne0\).  For fixed \(j\in[\bar n]\),
\[
 (\mat S\vct v)_j=\sum_{k=1}^{\bar n}
 (\mat H_{\bar n})_{j,k}\varepsilon_kv_k.
\]
The summands are independent and have mean zero.  Since every Hadamard
entry has absolute value one,
\[
\begin{aligned}
 \mathbb E_{\mat D_{\vct{\varepsilon}}}[(\mat S\vct v)_j]&=0,\\
 \operatorname{Var}_{\mat D_{\vct{\varepsilon}}}[(\mat S\vct v)_j]
 &=\sum_{k=1}^{\bar n}v_k^2
 =\norm{\vct v}^2\le \norm{\vct b_1}^2.
\end{aligned}
\]
Hoeffding's inequality therefore gives
\[
\begin{aligned}
 \Pr_{\mat D_{\vct{\varepsilon}}}\left[|(\mat S\vct v)_j|>\frac W4\right]
 &\le2\exp\left(-\frac{W^2}{32\norm{\vct v}^2}\right)\\
 &\le2\exp\left(-\frac{W^2}{32\norm{\vct b_1}^2}\right).
\end{aligned}
\]
A union bound and Equation~\eqref{eq:grid-width} give, since
\(\bar n\ge2\),
\[
\begin{aligned}
 \Pr_{\mat D_{\vct{\varepsilon}}}\left[
 \max_{j\in[\bar n]}|(\mat S\vct v)_j|>\frac W4\right]
 &\le2\bar n\exp\left(-\frac{W^2}{32\norm{\vct b_1}^2}\right)\\
 &\le2\bar n\exp(-4\log_2\bar n)
 \le4e^{-4}<\frac12.
\end{aligned}
\]
Thus, with probability at least \(1/2\),
\[
 \max_{j\in[\bar n]}|(\mat S\vct v)_j|\le W/4.
\]
Fix a sign matrix \(\mat D_{\vct{\varepsilon}}\) satisfying this bound.
For each \(j\), set
\[
 a_j:=(\mat S\vct x)_j,\qquad
 b_j:=(\mat S(\vct x-\vct v))_j.
\]
Both quantities are integers.  Moreover,
\[
 |a_j-b_j|=|(\mat S\vct v)_j|\le W/4.
\]
Exactly
\(|a_j-b_j|\) integers \(\sigma_j\in\{0,\ldots,W-1\}\) make
\(a_j+\sigma_j\) and \(b_j+\sigma_j\) belong to different intervals
\([kW,(k+1)W)\), where \(k\in\Z\).  Hence
\[
 \Pr_{\sigma_j}\left[
 \left\lfloor\frac{a_j+\sigma_j}{W}\right\rfloor
 =\left\lfloor\frac{b_j+\sigma_j}{W}\right\rfloor
 \right]
 =1-\frac{|a_j-b_j|}{W}\ge\frac34.
\]
The shifts are independent, so
\[
 \Pr_{\vct{\sigma}}[\Cell_g(\vct x)=\Cell_g(\vct x-\vct v)
 \mid \mat D_{\vct{\varepsilon}}]
 \ge(3/4)^{\bar n}.
\]
Averaging over \(\mat D_{\vct{\varepsilon}}\) proves the collision bound.

If \(\vct x,\vct y\) share a cell, each coordinate of \(\mat S(\vct x-\vct y)\) has absolute value
less than \(W\).  Hence
\[
 \sqrt{\bar n}\norm{\vct x-\vct y}
 =\norm{\mat S(\vct x-\vct y)}<\sqrt{\bar n}W.
\]
Dividing by \(\sqrt{\bar n}\) gives
\(\norm{\vct x-\vct y}<W\).
\end{proof}

The following \cref{lem:cell-occupancy} gives an upper bound for the
number of lattice points in each cell.

\begin{lemma}\label{lem:cell-occupancy}
Given a quasi-HKZ basis with rank $d$,
every cell contains at most
\[
 K:=\left(1+128\left\lceil\sqrt{\log_2n}\right\rceil\right)^d
 =\exp(O(d\log\log n))
\]
distinct points in \(\cL\).
\end{lemma}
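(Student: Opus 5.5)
The plan is a standard packing argument inside the $d$-dimensional span of $\cL$. It combines the diameter bound from \cref{lem:grid-collision} with the comparison $\norm{\vct b_1}\le2\lambda_1$ from \cref{lem:hkz-shape}.

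First, fix a cell and suppose it contains at least one lattice point $\vct x_0\in\cL$. By \cref{lem:grid-collision}, every other lattice point $\vct x$ in the same cell satisfies $\norm{\vct x-\vct x_0}<W$. Hence all such points lie in the ball of radius $W$ around $\vct x_0$, intersected with $\operatorname{span}(\cL)$. This span is a $d$-dimensional subspace of $\mathbb R^{\bar n}$, since the embedding appends zero coordinates. The random map $\mat S$ plays no role here: the diameter statement is already expressed in the original Euclidean norm.

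Second, I bound $W/\lambda_1$. By minimality of the power of two $W$ in \eqref{eq:grid-width}, we have $W<2\sqrt{128\log_2\bar n}\,\norm{\vct b_1}$. Since $\log_2\bar n=\lceil\log_2 n\rceil$ and $\lceil\sqrt{\log_2 n}\rceil^2$ is an integer that is at least $\log_2 n$, it follows that $\sqrt{\log_2\bar n}\le\lceil\sqrt{\log_2 n}\rceil$. Combining this with $\norm{\vct b_1}\le2\lambda_1$ from \cref{lem:hkz-shape} gives
\[
 \frac{2W}{\lambda_1}<8\sqrt{128}\,\bigl\lceil\sqrt{\log_2 n}\bigr\rceil\le128\bigl\lceil\sqrt{\log_2 n}\bigr\rceil .
\]

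Third, I apply the packing bound in dimension $d$. Distinct lattice points are at distance at least $\lambda_1$ from one another. Therefore the $d$-dimensional open balls of radius $\lambda_1/2$ centered at the lattice points of the cell are pairwise disjoint. All of them lie inside the $d$-dimensional ball of radius $W+\lambda_1/2$ around $\vct x_0$. Comparing volumes, the number of points is at most
\[
 \left(\frac{W+\lambda_1/2}{\lambda_1/2}\right)^d=\left(1+\frac{2W}{\lambda_1}\right)^d\le K .
\]
The asymptotic form $K=\exp(O(d\log\log n))$ is immediate.

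The only step needing care is the second one. There I must track the factor $2$ lost by rounding $W$ up to a power of two, and the factor $2$ from $\norm{\vct b_1}\le2\lambda_1$, to confirm that the constant $128$ suffices. I must also make sure the volume comparison is carried out in the $d$-dimensional span rather than in $\mathbb R^{\bar n}$, so that the exponent is $d$ and not $\bar n$. Both checks look routine.
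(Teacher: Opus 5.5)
Your proposal is correct and follows essentially the same route as the paper: you fix a point of the cell, use the diameter bound $\norm{\vct x-\vct x_0}<W$, compare volumes of disjoint radius-$\lambda_1/2$ balls in the $d$-dimensional span, and bound $2W/\lambda_1$ using the minimality of $W$ and $\norm{\vct b_1}\le2\lambda_1$. The only difference is cosmetic: you use $\sqrt{\lceil\log_2 n\rceil}\le\lceil\sqrt{\log_2 n}\rceil$ where the paper uses $\lceil\log_2 n\rceil\le 2\log_2 n$, which gives you the slightly smaller constant $64\sqrt2$ in place of $128$.
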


\begin{proof}
Let \(\mathcal Q\) be the set of lattice points in a fixed cell.
The claim is immediate if \(\mathcal Q=\varnothing\), so fix
\(\vct x_0\in\mathcal Q\).  By
\cref{lem:grid-collision}, every \(\vct x\in\mathcal Q\) satisfies
\(\norm{\vct x-\vct x_0}<W\).  Moreover, the open balls of radius
\(\lambda_1/2\) centered at the points of \(\mathcal Q\) are pairwise disjoint,
and all lie in the ball of radius \(W+\lambda_1/2\) centered at
\(\vct x_0\).  Comparing \(d\)-dimensional volumes in the lattice span gives
\[
 |\mathcal Q|\le
 \left(\frac{W+\lambda_1/2}{\lambda_1/2}\right)^d
 =\left(1+\frac{2W}{\lambda_1}\right)^d.
\]
The minimality of \(W\) in Equation~\eqref{eq:grid-width} and
\(\log_2\bar n=\lceil\log_2n\rceil\le2\log_2n\) give
\[
 W<2\sqrt{128}\,\norm{\vct b_1}\sqrt{\log_2\bar n}
 \le32\norm{\vct b_1}\sqrt{\log_2n}.
\]
Together with \(\norm{\vct b_1}\le2\lambda_1\) from
\cref{lem:hkz-shape}, this implies
\[
 \frac{2W}{\lambda_1}<128\sqrt{\log_2n}
 \le128\left\lceil\sqrt{\log_2n}\right\rceil.
\]
Substitution proves \(|\mathcal Q|\le K\).  Finally,
\[
 \left\lceil\sqrt{\log_2n}\right\rceil
 =\left\lceil\sqrt{\lceil\log_2n\rceil}\right\rceil,
\]
so \(K\) is exactly computable.
\end{proof}

Fix the arrays \(\vct X\) and \(\vct Y\), and fix a grid \(g\).
Let \(n_{\vct x}\) be the multiplicity of \(\vct x\) in \(\vct U\).  Since each cell
\(C\) contains at most \(K\) distinct points, applying Cauchy--Schwarz
within each cell and then summing over the cells gives
\begin{equation}\label{eq:cell-frequency}
  F_{\rm cell}:=
  \sum_C\left(\sum_{\vct x\in C}n_{\vct x}\right)^2\le K\sum_{\vct x}n_{\vct x}^2=K F_{\rm pt}.
\end{equation}

To apply \cref{lem:collision-search}, represent each of the \(2m\)
cell labels by a canonical bit string of a common polynomial length.  To
compress these labels to a universe of size \((2m)^{O(1)}\), choose \(h\)
independently and uniformly from a pairwise-universal family of binary
affine maps to \(\{0,1\}^{q_h}\), where
\(q_h:=\lceil8\log_2(2m)\rceil\).  This range has size less than
\(2(2m)^8\).

Define
\[
 Z_g:=\sum_{\substack{i,j\in[m]\\\vct X_i-\vct Y_j=\vct v^*}}
 \mathbf1[\Cell_g(\vct X_i)=\Cell_g(\vct Y_j)].
\]
Apply \(h\) to the cell labels in the order of \(\vct U\) to
obtain the compressed label array
\[
 \vct C_{g,h}:=\bigl(h(\Cell_g(\vct U_t))\bigr)_{t\in[2m]}.
\]
The expected number over \(h\) of ordered pairs
with unequal cell labels but equal compressed labels is at most
\[
 \frac{(2m)^2}{2^{q_h}}\le(2m)^{-6}.
\]
Equal cell labels always
have equal compressed labels.  Hence, by Markov's inequality, \(h\) creates no
equality between unequal cell labels with probability at least
\(1-(2m)^{-6}\ge63/64\).

\begin{theorem}
\label{thm:grid-label-bounds}
Assume
\[
 Z\ge z_0,\qquad F_{\rm pt}\le f_0,
\]
and set
\[
 r_0:=\left\lfloor\frac{\gamma z_0}{4}\right\rfloor,\qquad
 p_0:=\min\{(2m)^2,Kf_0\}.
\]
If \(g\) and \(h\) are sampled independently as above, then
\[
 \Pr_{g,h}\!\left[
   \Ftwo(\vct C_{g,h})\le p_0
   \ \text{and}\ Z_g\ge r_0
 \right]\ge\frac{\gamma}{8}.
\]
\end{theorem}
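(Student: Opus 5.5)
We split the event into two parts: one about \(Z_g\), which depends only on \(g\), and one about \(\Ftwo(\vct C_{g,h})\), which needs \(h\) as well. We show the first has probability at least \(\gamma/4\) over \(g\), and the second fails with probability at most \(1/64\) over \(h\) for every fixed \(g\).

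\emph{Step 1 (the \(Z_g\) bound).} Fix the arrays \(\vct X,\vct Y\), so that \(Z\ge z_0\). For each of the \(Z\) index pairs \((i,j)\) with \(\vct X_i-\vct Y_j=\vct v^*\), apply \cref{lem:grid-collision} with \(\vct x=\vct X_i\) and \(\vct v=\vct v^*\). This is valid because \(\norm{\vct v^*}=\lambda_1\le\norm{\vct b_1}\). It shows that the pair receives equal cell labels with probability at least \(\gamma\). By linearity, \(\mathbb E_g[Z_g]\ge\gamma Z\). Since \(0\le Z_g\le Z\), a reverse-Markov (averaging) argument gives
\[
 \gamma Z\le\mathbb E_g[Z_g]\le \tfrac{\gamma}{2}Z+Z\Pr_g[Z_g\ge\gamma Z/2].
\]
Hence \(\Pr_g[Z_g\ge\gamma Z/2]\ge\gamma/2\). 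On this event \(Z_g\ge\gamma z_0/2\ge r_0\). I will keep the weaker constant \(\gamma/4\) in hand to leave slack for the final combination.

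\emph{Step 2 (the \(\Ftwo\) bound).} Fix any grid \(g\). By Equation~\eqref{eq:cell-frequency}, the cell-label array has second moment \(F_{\rm cell}\le KF_{\rm pt}\le Kf_0\). It also trivially satisfies \(F_{\rm cell}\le(2m)^2\), since it counts ordered pairs among \(2m\) positions. Thus \(F_{\rm cell}\le p_0\). Compression by \(h\) can only increase the second moment, and only by merging unequal cell labels. By the computation preceding the theorem, with probability at least \(63/64\) over \(h\), independently of \(g\), no such merge occurs. On that event \(\Ftwo(\vct C_{g,h})=F_{\rm cell}\le p_0\). Note that \(\Ftwo\) counts ordered pairs, which matches the count used for \(h\).

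\emph{Step 3 (combining).} Since \(h\) is independent of \(g\),
\[
 \Pr_{g,h}[\text{both}]\ge\Pr_g[Z_g\ge r_0]-\sup_g\Pr_h[\text{merge}]\ge\gamma/2-\Pr_h[\text{merge}].
\]
A subtraction \(\gamma/2-1/64\) is useless, because \(\gamma=2^{-O(n)}\) is tiny. So we must use independence multiplicatively instead:
\[
 \Pr_{g,h}[\text{both}]\ge\Pr_g[Z_g\ge r_0]\cdot\inf_g\Pr_h[\text{no merge}]\ge\frac{\gamma}{2}\cdot\frac{63}{64}\ge\frac{\gamma}{8}.
\]
The main point to verify is that the no-merge bound holds for every fixed \(g\) and every fixed pair of arrays. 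It does: for each ordered pair with unequal labels, the collision probability under the pairwise-universal \(h\) is \(2^{-q_h}\), whatever \(g\) is. This uniformity justifies the product.
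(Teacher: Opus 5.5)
Your proof is correct and follows the paper's argument. Both proofs establish \(\mathbb E_g[Z_g]\ge\gamma Z\) from \cref{lem:grid-collision}, obtain anti-concentration from \(0\le Z_g\le Z\), note that \(F_{\rm cell}\le p_0\) for every grid, and multiply by the \(g\)-uniform \(63/64\) no-merge probability of \(h\). The only difference is that you use a direct reverse-Markov averaging step (giving \(\gamma/2\)) where the paper applies Paley--Zygmund with \(\mathbb E_g[Z_g^2]\le Z\,\mathbb E_g[Z_g]\) (giving \(\gamma/4\)); either constant suffices.
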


\begin{proof}
Since \(m=2^{64n}\Pi\), \(d\le n\),
\(\bar n\le2n\), and \(\Pi\ge1\), the definitions give
\[
 \frac{\gamma z_0}{4}
 \ge2^{108n-4}(3/4)^{2n}\longrightarrow\infty.
\]
Hence \(r_0\ge1\)
for all sufficiently large \(n\).

For each target pair, its two lattice points are assigned to the same
cell with probability at least \(\gamma\) by \cref{lem:grid-collision}, so
\(\mathbb E_g[Z_g]\ge\gamma Z\).  Since \(0\le Z_g\le Z\), we also have
\(\mathbb E_g[Z_g^2]\le Z\mathbb E_g[Z_g]\).  Paley--Zygmund now gives
\[
 \Pr_g[Z_g\ge\tfrac12\mathbb E_g[Z_g]]
 \ge\frac{(\mathbb E_g[Z_g])^2}{4\mathbb E_g[Z_g^2]}
 \ge\frac{\mathbb E_g[Z_g]}{4Z}
 \ge\frac\gamma4.
\]
Let \(\mathcal A:=\{Z_g\ge\tfrac12\mathbb E_g[Z_g]\}\).  Whenever
\(\mathcal A\) occurs,
\[
 Z_g\ge\gamma Z/2\ge\gamma z_0/2\ge r_0.
\]

The cell multiplicities sum to \(2m\), so
\(F_{\rm cell}\le(2m)^2\).  Equation~\eqref{eq:cell-frequency} gives
\(F_{\rm cell}\le Kf_0\).
Thus \(F_{\rm cell}\le\min\{(2m)^2,Kf_0\}=p_0\) for every \(g\).

Let \(\mathcal H_g\) be the event that \(h\) is injective on
the distinct cell labels represented in the two arrays.  The
compression bound above gives
\(\Pr_h[\mathcal H_g\mid g]\ge63/64\) for every \(g\).  If both
\(\mathcal A\) and \(\mathcal H_g\) occur, then
\[
 \Ftwo(\vct C_{g,h})=F_{\rm cell}\le p_0,
 \qquad Z_g\ge r_0.
\]
Therefore,
\[
 \Pr_{g,h}[\mathcal A\cap\mathcal H_g]
 \ge\frac{63}{64}\Pr_g[\mathcal A]
 \ge\frac{63\gamma}{256}\ge\frac\gamma8.
\]
\end{proof}

\section{The complete SVP algorithm}
\label{sec:solver}
We now use the sampler and the collision search in~\cref{sec:sampler} to solve SVP, which is formally stated in the following \cref{thm:main}.

\begin{theorem}\label{thm:main}
There is a classical algorithm that, given any full-rank integer basis
\(\mat B\in\Z^{n\times n}\), returns a shortest nonzero vector of
\(\cL(\mat B)\) with probability at least \(2/3\).  On every execution,
it uses polynomial space and performs at most
\[
 \poly(|\mat B|)\,n^{\frac{n}{4e}+o(n)}
\]
bit operations, where \(|\mat B|\) is the input bit length.
\end{theorem}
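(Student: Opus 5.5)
The plan has two stages. First I build a quasi-HKZ solver that, given a quasi-HKZ integer basis of rank \(d\), outputs a shortest vector with constant probability. Then I embed it in Kannan's recursion, replacing the final enumeration step.

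\textbf{Quasi-HKZ solver.} Sample the array seeds \(h_{\vct X},h_{\vct Y}\) from \cref{subsec:representations}. By \cref{thm:many-representations,thm:point-frequency}, the events \(Z\ge z_0\) and \(F_{\rm pt}\le f_0\) then hold together with probability at least \(3/4\). Conditioned on this, repeat \(\Theta(1/\gamma)=2^{O(n)}\) times:
\begin{itemize}
\item sample an independent grid \(g\) and hash \(h\);
\item run the collision search of \cref{lem:collision-search} on the implicit array \(\vct C_{g,h}\) of length \(L=2m\), with parameters \(p=p_0\) and \(r_0\);
\item take \(\cG\) to be the set of target pairs \(\{i,m+j\}\) with \(\vct X_i-\vct Y_j=\vct v^*\) and equal cell labels.
\end{itemize}
For each output duplicate pair, recompute the two lattice points, and if they are distinct retain their difference when it is shorter than the best found so far. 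By \cref{thm:grid-label-bounds}, each repetition has \(\Ftwo\le p_0\) and \(|\cG|\ge r_0\) with probability \(\ge\gamma/8\). Given that, the search outputs a pair in \(\cG\) with constant probability, which yields \(\pm\vct v^*\). So \(2^{O(n)}\) repetitions give constant overall success, amplified to \(2/3\) by \(O(1)\) independent restarts.

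\textbf{Cost of the quasi-HKZ solver.} Each array access computes a sample via \cref{lem:coordinate-sampler}, costing \(d^{o(d)}\poly(|\mat B|)\) time. It then computes a cell label and a hash, both polynomial. The access count per search is \(\widetilde O(\sqrt{L}\,p_0/r_0)\). I bound this using \(m=2^{64n}\Pi\), \(K=2^{o(n)}\), and \(\gamma=2^{-O(n)}\):
\[
f_0=O\bigl(m+m^2 2^{3d}/\Pi\bigr)=O\bigl(m^2 2^{3d}/\Pi\bigr), \qquad r_0=\Omega\bigl(\gamma m^2 2^{-20d}/\Pi\bigr).
\]
Hence \(p_0/r_0\le Kf_0/r_0=2^{O(n)}\), so the cost is \(\sqrt{m}\,2^{O(n)}=2^{O(n)}\sqrt{\Pi}\). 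By \eqref{eq:rounded-scales} and \cref{lem:hkz-shape}, \(\Pi<2^dP\le d^{d/(2e)+o(d)}\). This gives \(\sqrt\Pi\le n^{n/(4e)+o(n)}\), the main exponent.

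Space is polynomial. It consists of the seeds of \(O(\bar\ell)\) bits each, with \(\bar\ell=O(n^2+\log\Pi)=\poly(n)\), the grid description, the \(O(\log^3 L)=\poly(n)\) bits of \cref{lem:collision-search}, the sampler workspace, and the current best vector. The Markov truncation in \cref{lem:collision-search} keeps the bound worst-case on every run. I also cap the number of repetitions deterministically, so the time bound is not merely expected.

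\textbf{Kannan's recursion.} I follow Kannan/Hanrot--Stehl\'e. LLL-reduce the input, then recursively HKZ-reduce the projected lattice \(\pi_2(\cL)\) in rank \(d-1\), lifting and size-reducing to obtain a quasi-HKZ basis; if \(r_2<\norm{\vct b_1}/2\), first replace \(\vct b_1\) by a shorter vector and restart, which happens polynomially often. I then call the quasi-HKZ solver to obtain a shortest vector, and complete the HKZ basis by recursing on the projection orthogonal to it. The time recurrence \(T(d)\le\poly(d)\,T(d-1)+n^{d/(4e)+o(d)}\poly(|\mat B|)\) solves to the claimed bound; bit sizes remain polynomial as in Kannan's analysis.

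\textbf{Main obstacle.} The delicate point is error probability across the recursion, since there are \(\poly(n)\) solver calls. I would boost each call to failure probability \(2^{-\Omega(n)}\) by \(O(n)\) independent repetitions, keeping the shortest output; outputs are always genuine lattice vectors, so keeping the minimum is safe. A union bound then gives \(2/3\). A failure that returns a non-shortest vector could break the quasi-HKZ invariant in later calls. To protect against this, I would make the running-time bounds hold on every execution: the solver's cost is capped by parameters computed from the current basis via \(\Pi\), the caps are enforced, and \(\Pi\) and \(\norm{\vct b_1}\) are controlled by the LLL-reduced input even when a call misbehaves. That is where I would spend the most care.
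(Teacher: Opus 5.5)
Your quasi-HKZ solver and its cost analysis match the paper's \cref{thm:quasi-hkz-solver}, with one harmless slip: \(K=\exp(O(d\log\log n))\) is not \(2^{o(n)}\). So \(p_0/r_0\le 2^{O(n)}K\), not \(2^{O(n)}\), which is still fine because \(K=n^{o(n)}\).

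\textbf{Gap 1: the running-time recurrence.} You bound the number of Gauss-reduction restarts only by \(\poly(d)\), and claim that \(T(d)\le\poly(d)\,T(d-1)+n^{d/(4e)+o(d)}\poly(|\mat B|)\) solves to \(n^{n/(4e)+o(n)}\poly(|\mat B|)\). It does not. Unrolling picks up a factor \(\prod_{d\le n}\poly(d)=n^{\Theta(n)}\), and even \(d^{\varepsilon}\) rounds per level would add \(\varepsilon n\) to the exponent. The missing ingredient is \cref{lem:kannan-preprocessing}: the loop stops after \(O(\log d)\) rounds. The argument runs as follows.
\begin{itemize}
\item When the test fails, size reduction gives \(\norm{\vct b_2}<\norm{\vct b_1}/\sqrt2\), so no shortest vector is parallel to \(\vct b_1\).
\item The HKZ-reduced projection then gives \(\norm{\vct b_2^*}\le\lambda_1\).
\item The Gauss area bound gives \(\norm{\vct b_1'}^2\le2\norm{\vct b_1}\lambda_1\), so the ratio \(\norm{\vct b_1}/\lambda_1\le2^{(d-1)/2}\) is essentially square-rooted each round.
\item It reaches a constant after \(O(\log d)\) rounds, after which the factor-\(\sqrt2\) decrease allows only \(O(1)\) more.
\end{itemize}
The recursion tree then has \(\prod_{d=2}^nO(\log(d+1))=\exp(O(n\log\log n))=n^{o(n)}\) nodes, which is exactly what the final bound needs.

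\textbf{Gap 2: the failure probability.} The same miscount breaks your probability argument. There are not \(\poly(n)\) solver calls. Every node of rank at least \(2\) makes at least two recursive calls (the loop and the completion step), so there are at least \(2^{n-1}\) solver groups. The only available upper bound is \(\exp(O(n\log\log n))\). Boosting each call to failure probability \(2^{-\Omega(n)}\) with \(O(n)\) repetitions leaves a union bound of order \(\exp(O(n\log\log n))\,2^{-\Omega(n)}\), which does not go to zero. You need something like the paper's \(n^2\) repetitions per call, giving \(3^{-n^2}\).

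\textbf{Gap 3: the worst-case running time.} Your plan does not work as stated. Once some call fails, the basis handed to the solver need not be quasi-HKZ, and LLL alone gives only \(\Pi\le2^{O(d^2)}\), so caps derived from \(\Pi\) are far too large. The paper avoids this with a single top-level time and space limit. The limit is set to the bound proved for executions in which every group succeeds, and the saved \(\vct b_1\) is returned if it is hit. Since the limit is never reached on good executions, the success probability stays at least \(2/3\), and the bound holds on every execution.
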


We first solve the problem for a quasi-HKZ basis in~\cref{subsec:solver-quasihkz}.  By
\cref{thm:many-representations}, the sample arrays contain many pairs whose
difference is a shortest vector. The random grid in \cref{thm:grid-label-bounds} turns many of these pairs into collisions while keeping the second frequency moment small. We can therefore apply \cref{lem:collision-search} to find one
such pair.  We then use this solver in Kannan's recursion to handle an
arbitrary basis in~\cref{subsec:complete-solver}.

\subsection{Solving SVP on a quasi-HKZ basis}
\label{subsec:solver-quasihkz}
Hanrot and Stehl\'e show that Kannan's method enumerates all lattice
vectors in a prescribed ball, which satisfies the shape bounds~\cref{lem:hkz-shape}.  Since we
need only one shortest vector, we replace this enumeration by collision
search.  Our subroutine samples two implicit arrays once and then repeats
the grid search.  On each grid, it searches for equal cell labels,
checks the corresponding difference exactly, and stores the shortest
valid difference.

\begin{figure}[htbp]
\centering
\begingroup
\setlength{\fboxsep}{7pt}
\setlength{\fboxrule}{0.4pt}
\fbox{%
\begin{minipage}{\dimexpr\linewidth-2\fboxsep-2\fboxrule\relax}
\small

\noindent\textbf{Input:}
A quasi-HKZ basis
\(\mat B=[\vct b_1\ \cdots\ \vct b_d]\in\Z^{n\times d}\)
of rank \(d\le n\).

\noindent\textbf{Output:}
A nonzero vector \(\vct v\in\cL(\mat B)\) satisfying
\(\norm{\vct v}\le\norm{\vct b_1}\).  With probability at least
\(2/3\), it is a shortest vector of \(\cL(\mat B)\).

\begin{enumerate}[
  label=\arabic*.,
  leftmargin=2.2em,
  labelsep=0.5em,
  itemsep=0.25\baselineskip,
  topsep=0.45\baselineskip,
  parsep=0pt
]
\item Set up the sampler for \(\nu\) from \cref{sec:sampler} and use the array length
      \(m\) from Equation~\eqref{eq:implicit-array-length}.  Let \(\gamma\)
      be the lower bound from \cref{lem:grid-collision}.  Two lattice points
      at distance at most \(\norm{\vct b_1}\) share a grid cell with probability at
      least \(\gamma\).

\item Draw seeds for the two implicit arrays \(\vct X,\vct Y\).
      Set \(\vct v\gets\vct b_1\).

\item Repeat a fixed, sufficiently large constant multiple of \(1/\gamma\)
      times:
  \begin{enumerate}[
    label=(\alph*),
    leftmargin=1.8em,
    itemsep=0.15\baselineskip,
    topsep=0.2\baselineskip,
    parsep=0pt
  ]
  \item Draw a fresh grid \(g\) and compression map \(h\).  Use the numerical bounds
        \(p_0,r_0\) from \cref{thm:grid-label-bounds} in the search of
        \cref{lem:collision-search} on the combined compressed label
        array.
  \item For every returned cross-array pair \((\vct X_i,\vct Y_j)\),
        recompute the two samples.  Accept
        the pair only if
        \[
          \Cell_g(\vct X_i)=\Cell_g(\vct Y_j)
          \qquad\text{and}\qquad
          0<\norm{\vct X_i-\vct Y_j}\le\norm{\vct b_1}.
        \]
  \item If an accepted pair gives a shorter vector than \(\vct v\), replace
        \(\vct v\) by its difference.
  \end{enumerate}
\item Return \(\vct v\).
\end{enumerate}

\smallskip
All array, grid, compression, and walk seeds used above are mutually
independent.

\end{minipage}%
}
\endgroup
\caption{The SVP solver for a quasi-HKZ basis.}
\label{fig:quasi-hkz-solver}
\end{figure}

The next theorem gives the correctness, success probability, and cost of
this subroutine.

\begin{theorem}\label{thm:quasi-hkz-solver}
Given a quasi-HKZ basis
\(\mat B\in\Z^{n\times d}\) of rank \(d\le n\), the solver always returns
a vector \(\vct v\in\cL(\mat B)\setminus\{\vct 0\}\) satisfying
\(\norm{\vct v}\le\norm{\vct b_1}\).  With probability at least
\(2/3\), this vector is shortest.  The solver uses polynomial space and
performs at most
\[
 2^{O(n)}\exp(O(d\log\log n))\,
 d^{d/(4e)+o(d)}\poly(|\mat B|)
\]
bit operations.
\end{theorem}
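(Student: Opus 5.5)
The proof splits into three parts: correctness, which is deterministic; the success probability, which chains the earlier probabilistic statements; and the cost, which combines \cref{lem:collision-search} with the sampler cost and the shape bounds.

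\emph{Correctness.} The solver initializes \(\vct v\gets\vct b_1\). It replaces \(\vct v\) only by a difference \(\vct X_i-\vct Y_j\) of two lattice points that passes the exact check \(0<\norm{\vct X_i-\vct Y_j}\le\norm{\vct b_1}\). Hence the returned vector always lies in \(\cL(\mat B)\setminus\{\vct 0\}\) and has norm at most \(\norm{\vct b_1}\).

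\emph{Success probability.} Fix a shortest vector \(\vct v^*=\mat B\vct z^*\). By the union bound after \cref{thm:point-frequency}, the array seeds give \(Z\ge z_0\) and \(F_{\rm pt}\le f_0\) with probability at least \(3/4\). Condition on this event. For each iteration, \cref{thm:grid-label-bounds} gives probability at least \(\gamma/8\) that \(\Ftwo(\vct C_{g,h})\le p_0\) and \(Z_g\ge r_0\).

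On that event, let \(\cG\) be the set of index pairs \((i,m+j)\) with \(\vct X_i-\vct Y_j=\vct v^*\) and \(\Cell_g(\vct X_i)=\Cell_g(\vct Y_j)\). These are duplicate pairs of \(\vct C_{g,h}\), and \(|\cG|=Z_g\ge r_0\). We also have \(p_0\le(2m)^2=L^2\), and the universe has size \((2m)^{O(1)}\). So \cref{lem:collision-search} outputs a pair in \(\cG\) with some constant probability \(\eta\). Such a pair passes the acceptance test, since the cell labels are recomputed exactly and \(\norm{\vct v^*}=\lambda_1\le\norm{\vct b_1}\). Because the solver keeps the shortest accepted difference, it then returns a vector of norm \(\lambda_1\).

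The grid, compression, and walk seeds are fresh in each iteration. Therefore each iteration succeeds independently with probability at least \(\eta\gamma/8\). After \(c/\gamma\) iterations with \(c\) a large constant, the failure probability is at most \(e^{-c\eta/8}\), and we choose \(c\) so this is at most \(1/9\). If needed, we boost the \(3/4\) bound for the array event by enlarging the constant \(16\) in \cref{thm:point-frequency}; Markov's inequality still applies and \(p_0\) changes only by a constant. This gives overall success at least \(2/3\).

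\emph{Cost.} This step is the main obstacle. The array length is \(L=2m=2^{64n+1}\Pi\), and \(\Pi<2^dP\le2^{O(d)}d^{d/(2e)}\) by \eqref{eq:rounded-scales} and \cref{lem:hkz-shape}. We compute the ratio
\[
 \frac{p_0}{r_0}\le\frac{Kf_0}{r_0}
 =\frac{K\cdot 2^{O(n)}(m+m^2/\Pi)}{\gamma m^2 2^{-20d}/\Pi}
 =2^{O(n)}K,
\]
where the first term contributes \(\Pi/m=2^{-64n}\), which is harmless. Hence one call to \cref{lem:collision-search} makes
\[
 \widetilde O\bigl(\sqrt L\,p_0/r_0\bigr)=2^{O(n)}K\sqrt\Pi=2^{O(n)}\exp(O(d\log\log n))\,d^{d/(4e)}
\]
array accesses. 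This is where the square-root saving appears: the \(m^2\) scaling of \(z_0\) cancels the \(m^2/\Pi\) term of \(f_0\), leaving only \(\sqrt L\sim\sqrt\Pi\).

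Each access regenerates one sample from its seed. This uses the affine field hash and the coordinate samplers of \cref{lem:coordinate-sampler}, at cost \(d^{o(d)}\poly(|\mat B|)\). It then computes \(\kappa^{-1}\), the signed Hadamard product, the cell label, and the hash \(h\), all in polynomial time because coordinates have polynomially many bits. Checking each output pair costs the same. The number of output pairs is at most the number of accesses. Multiplying by the \(O(1/\gamma)=2^{O(n)}\) iterations gives the claimed bound.

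For space, the solver stores only the seeds: \(O(\bar\ell)\) bits for the arrays, which is polynomial since \(\log\Pi=O(d\log d)\). It also stores \(\mat D_{\vct\varepsilon}\), \(\vct\sigma\), and the binary affine map \(h\), all of polynomial size. The walk state uses \(O(\log^3L)=\poly(n)\) bits, plus the polynomial working space of the sampler and the current best \(\vct v\).
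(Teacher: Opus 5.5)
Your proposal is correct and follows the paper's proof essentially step for step. It uses the same deterministic correctness argument, the same chain \cref{thm:many-representations,thm:point-frequency} \(\to\) \cref{thm:grid-label-bounds} \(\to\) \cref{lem:collision-search} for the success probability (and \((3/4)(8/9)=2/3\) already suffices, so the optional boosting of the constant \(16\) is unnecessary), and the same cancellation \(p_0/r_0\le\gamma^{-1}K2^{O(d)}\) leading to \(\sqrt{m}\,\gamma^{-2}K2^{O(d)}\) accesses overall. The paper states explicitly one point that you leave implicit: the truncated access limit \(C\sqrt{L}\,p_0/r_0\) in \cref{lem:collision-search} depends only on the numerical parameters, so the per-iteration cost bound also holds in the iterations and array draws where that lemma's hypotheses fail.
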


\begin{proof}
The initial vector is \(\vct b_1\).  Every sampled point belongs to
\(\cL(\mat B)\), so every stored difference does as well.  The exact test
excludes zero and requires norm at most \(\norm{\vct b_1}\), so the output is always a
nonzero lattice vector no longer than \(\vct b_1\).

\mypara{Success probability}
Fix a shortest vector \(\vct v^*\).  Since \(\vct b_1\) is a nonzero
lattice vector, \(\norm{\vct v^*}\le\norm{\vct b_1}\).
By \cref{thm:many-representations,thm:point-frequency} and a union bound,
for all sufficiently large \(n\), with probability at least \(3/4\) the two
arrays contain at least \(z_0\) pairs whose difference is \(\vct v^*\),
and their second frequency moment is at most \(f_0\).  Fix arrays with
these two properties.

For a fresh grid and compression map, \cref{thm:grid-label-bounds} gives the
following with probability at least \(\gamma/8\): the compressed label
array has second frequency moment at most \(p_0\) and contains at least
\(r_0\) target pairs.  Conditioned on this event, the two hypotheses of
\cref{lem:collision-search} hold with \(p=p_0\) and \(\cG\) equal to the set
of target pairs.  Hence the collision search returns a target pair with
constant probability.  The
pair passes the exact test:
its points share a cell, and their difference is the nonzero vector
\(\vct v^*\) of norm at most \(\norm{\vct b_1}\).

Thus one outer iteration succeeds with probability \(\Omega(\gamma)\).
The iterations use independent seeds, so a sufficiently large constant
multiple of \(1/\gamma\) iterations has failure probability at most
\(1/9\).  Including the probability that the arrays have the two properties
above, the success probability is at least \((3/4)(8/9)=2/3\).
When a target pair is inspected, either the current candidate is already
shortest or it is replaced by \(\vct v^*\).
No later accepted vector can be shorter.

\mypara{Complexity}
We first bound the number of array accesses.
The parameters in \cref{thm:grid-label-bounds} satisfy
\(r_0\ge\gamma z_0/8\) and \(p_0\le Kf_0\).  The definitions of
\(z_0,f_0\) in \cref{thm:many-representations,thm:point-frequency}, together with
\(m=2^{64n}\Pi\) from Equation~\eqref{eq:implicit-array-length}, give
\(f_0/z_0=2^{O(d)}\).  It follows that
\[
 \frac{p_0}{r_0}\le \gamma^{-1}K2^{O(d)}.
\]
The access bound in \cref{lem:collision-search}, applied to an array of
length \(2m\), is therefore
\[
 \widetilde O\!\left(
   \sqrt m\,\gamma^{-1}K2^{O(d)}
 \right)
\]
per outer iteration.  We use the same fixed access limit even when the
grid does not satisfy the lemma's hypotheses.  Thus the bound holds for
every choice of the seeds.  The \(O(\gamma^{-1})\) outer iterations add the
second loss in \(\gamma\), for a total of
\[
 \widetilde O\!\left(
   \sqrt m\,\gamma^{-2}K2^{O(d)}
 \right)
\]
array accesses.

It remains to bound the three factors in this expression.
By \cref{lem:grid-collision}, \(\gamma^{-2}=2^{O(n)}\).
By \cref{lem:cell-occupancy},
\(K=\exp(O(d\log\log n))\).  Finally,
Equation~\eqref{eq:implicit-array-length} gives
\(\sqrt m=2^{32n}\sqrt\Pi\).  Equation~\eqref{eq:rounded-scales} gives
\(\Pi<2^dP\), while \cref{lem:hkz-shape} gives
\(P\le2^{O(d)}d^{d/(2e)}\).  Hence
\[
 \sqrt m\le2^{O(n)}d^{d/(4e)}.
\]

By \cref{lem:coordinate-sampler}, sampling all \(d\) coordinates takes
\(d^{o(d)}\poly(|\mat B|)\) time and polynomial space.  Evaluating the
seed polynomial and computing the cell label add only polynomial work.
Each walk reports at most one pair,
so the access bound also covers all exact checks.  The total number of bit
operations is therefore
\[
 2^{O(n)}\exp(O(d\log\log n))\,
 d^{d/(4e)+o(d)}\poly(|\mat B|).
\]

The arrays are stored by their seeds.  The space bound in
\cref{lem:coordinate-sampler} covers sample evaluation, and
\cref{lem:collision-search} uses only \(O(\log^3(2m))\) additional bits for
the walk.  The bounds above give \(\log m=O(n+d\log d)\).  Thus the grid,
compression map, current samples, and best vector all have
polynomial-size descriptions.  This proves the claimed space bound.

\end{proof}

\subsection{Kannan's recursion for arbitrary bases}
\label{subsec:complete-solver}
To handle an arbitrary basis, we use Kannan's
recursion~\cite{Kannan87,HanrotStehle07}.
It first recursively reduces the projected basis
\([\pi_2(\vct b_2)\ \cdots\ \pi_2(\vct b_d)]\) of
\(\pi_2(\cL(\mat B))\) to an HKZ-reduced basis.  Whenever the norm condition
\(\norm{\vct b_2^*}\ge\norm{\vct b_1}/2\) fails, Gauss reduction
replaces \(\vct b_1\) by a shorter vector, and the process repeats.
Once the basis is quasi-HKZ,
the solver in \cref{thm:quasi-hkz-solver} finds a shortest vector with
probability at least \(2/3\).  We run it \(n^2\) times and keep the shortest
result.  We put this vector first and recursively reduce the
corresponding projected basis once more.  When every group of
solver runs contains a shortest vector, the resulting basis is HKZ reduced.

\Cref{fig:complete-solver} gives the recursive procedure
\textsc{Reduce}.  At the top level, we save \(\vct b_1\), call
\textsc{Reduce}\((\mat B)\), and return the first vector of the
result.  If the call exceeds the time or space bound proved below, we
stop and return the saved vector.

\begin{figure}[!htbp]
\centering
\begingroup
\setlength{\fboxsep}{5pt}
\setlength{\fboxrule}{0.4pt}
\fbox{%
\begin{minipage}{\dimexpr\linewidth-2\fboxsep-2\fboxrule\relax}
\small

\noindent\textbf{Input:} A rank-\(d\) integer lattice basis
\(\mat B=[\vct b_1\ \cdots\ \vct b_d]\in\Z^{n\times d}\).

\noindent\textbf{Output:} A basis of the same lattice, HKZ reduced if each
group of solver runs finds a shortest vector.

\begin{enumerate}[
  label=\arabic*.,
  leftmargin=1.8em,
  labelsep=0.35em,
  itemsep=0pt,
  topsep=0.2\baselineskip,
  parsep=0pt
]
\item LLL-reduce \(\mat B\).
\item If \(d=1\), return \(\mat B\).
\item Repeat until:
\item \hspace*{1em}Compute the projections of
      \(\vct b_2,\ldots,\vct b_d\) orthogonally to \(\vct b_1\).
\item \hspace*{1em}Apply \textsc{Reduce} to this projected basis.
\item \hspace*{1em}Lift the result to lattice vectors and size-reduce
      them against \(\vct b_1\).
\item \hspace*{1em}If \(\norm{\vct b_2^*}\ge\norm{\vct b_1}/2\),
      leave the loop.
\item \hspace*{1em}Gauss-reduce \((\vct b_1,\vct b_2)\) and repeat.
\item Run the solver in \cref{fig:quasi-hkz-solver} \(n^2\) times.
      Divide each output by the gcd of its coefficients in \(\mat B\).
      Let \(\vct v\) be the shortest resulting lattice vector.
\item Set \(\mat B\gets\operatorname{LLL}
      (\vct v,\vct b_1,\ldots,\vct b_d)\), processing
      \(\vct v\) first and deleting dependent vectors.
\item Compute the projections of \(\vct b_2,\ldots,\vct b_d\)
      orthogonally to \(\vct b_1\).
\item Apply \textsc{Reduce} to this projected basis.
\item Lift the result to lattice vectors, size-reduce them against
      \(\vct b_1\), and return \(\mat B\).
\end{enumerate}

\end{minipage}%
}
\endgroup
\caption{Kannan's recursive reduction \textsc{Reduce}.}
\label{fig:complete-solver}
\end{figure}
\FloatBarrier

The projected bases have rational entries.  Before each recursive call,
we clear their denominators.  To lift the result, we record the integer
basis changes, apply them to the corresponding unprojected
vectors \(\vct b_2,\ldots,\vct b_d\), and size-reduce against
the first vector.  All solver calls use independent seeds.

The next lemma bounds the number of loop rounds.  It uses the standard
analysis of Kannan's recursion~\cite{Kannan87,LLL82,Helfrich85,HanrotStehle07}.

\begin{lemma}
\label{lem:kannan-preprocessing}
For \(d\ge2\), the loop in
\cref{fig:complete-solver} produces a quasi-HKZ basis in
\(O(\log d)\) rounds.
\end{lemma}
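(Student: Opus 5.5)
The plan is to track \(\norm{\vct b_1}\) across loop rounds and show it shrinks geometrically. The base is the LLL bound, measured against the HKZ lengths of the projected lattice.

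\emph{Termination condition.} After step 5 (the recursive \textsc{Reduce} on the projected basis) and the lift/size-reduction of step 6, the basis is size reduced. Assuming the recursive call succeeded, \([\pi_2(\vct b_2)\ \cdots\ \pi_2(\vct b_d)]\) is HKZ reduced. So \(\vct b_2^*\) is a shortest vector of \(\pi_2(\cL)\). When the test in step 7 passes, \cref{def:quasi-hkz} holds exactly, and a single successful exit produces a quasi-HKZ basis.

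\emph{Progress per round.} If the test fails, then \(r_2<\norm{\vct b_1}/2\). Gauss-reducing the rank-2 lattice spanned by \((\vct b_1,\vct b_2)\) yields a new first vector of length \(\lambda_1\) of that 2-dimensional lattice. I would bound this by the shorter of \(\norm{\vct b_1}\) and \(\norm{\vct b_2}\). Size reduction gives \(\norm{\vct b_2}^2=r_2^2+\mu_{2,1}^2\norm{\vct b_1}^2\le \norm{\vct b_1}^2/4+\norm{\vct b_1}^2/4\). Hence the new first vector has norm at most \(\norm{\vct b_1}/\sqrt2\), so \(\norm{\vct b_1}\) drops by a factor \(\sqrt2\) per failing round.

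\emph{Lower barrier.} I need to show \(\norm{\vct b_1}\) cannot drop below a fixed quantity related to its starting value by more than a \(d^{O(1)}\) factor. At the start, step 1 makes the basis LLL-reduced, so \(\norm{\vct b_1}\le 2^{(d-1)/2}\lambda_1\). That alone gives only \(O(d)\) rounds, so I would refine it, following \cite{Kannan87,Helfrich85,HanrotStehle07}. Consider the first round, after the recursive call has HKZ-reduced the projected part. Every nonzero lattice vector has norm at least \(\min\{\norm{\vct b_1}\text{-multiples}, \lambda_1(\pi_2(\cL))\}\), and \(\lambda_1(\pi_2(\cL))=r_2\). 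So \(\lambda_1(\cL)\ge\min(\norm{\vct b_1},r_2)\), which means whenever the loop fails we know \(\lambda_1\ge r_2\) (since \(\norm{\vct b_1}>2r_2\)). I still need an upper bound of \(\norm{\vct b_1}\) relative to \(r_2\) after the first round. LLL reduction is preserved on the first two vectors, giving \(r_2\ge \norm{\vct b_1}/\sqrt2\cdot 2^{-?}\), but that is again exponential.

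The better route, the standard one, is to compare \(\norm{\vct b_1}\) with \(\lambda_1\) through Minkowski or Hermite bounds on the HKZ projected lattice. Using \cref{lem:hkz-product}-style estimates \(\det\) versus \(r_2\), one gets \(\lambda_1\ge \norm{\vct b_1}/d^{O(1)}\) after the first round. Since each round divides \(\norm{\vct b_1}\) by \(\sqrt2\) and it never falls below \(\lambda_1\), this gives \(O(\log d)\) rounds.

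The main obstacle is this polynomial gap. Concretely, I would show that after one round \(\norm{\vct b_1}\le \sqrt d\,\lambda_1\)-type bounds hold. The argument: a shortest vector \(\vct v\) of \(\cL\) is either a multiple of \(\vct b_1\), in which case the loop would exit, or projects to a nonzero vector of \(\pi_2(\cL)\). I would combine this with the fact that Gauss reduction makes \(\vct b_1\) at most the length of any vector in the plane of \(\vct b_1,\vct b_2\). Then \(\lambda_1\ge r_2\), and \(r_2\) is comparable to \(\lambda_1(\pi_2(\cL))\ge\) the HKZ \(r_i\)'s. This relation is what I would push through via Hermite's bound \(r_2\le\sqrt{d}\det(\pi_2\cL)^{1/(d-1)}\) together with the LLL-derived relation between \(\norm{\vct b_1}\) and the determinant.
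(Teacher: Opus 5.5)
Your termination check and your per-round bound \(\norm{\vct b_1'}<\norm{\vct b_1}/\sqrt2\) are correct and match the paper. Together with the LLL bound, however, they give only \(O(d)\) rounds, and your step meant to upgrade this to \(O(\log d)\) does not go through. You assert that after one round \(\norm{\vct b_1}\le d^{O(1)}\lambda_1\), but the tools you propose point the wrong way. Hermite's bound \(r_2\le\sqrt d\,\det(\pi_2(\cL))^{1/(d-1)}\) and \cref{lem:hkz-product} are \emph{upper} bounds on \(r_2\). Going through \(\lambda_1\ge r_2\), you would instead need a \emph{lower} bound on \(r_2\) in terms of \(\norm{\vct b_1}\). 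The LLL relation between \(\norm{\vct b_1}\) and \(\det\cL\) also loses a factor \(2^{\Theta(d)}\). Nothing in the HKZ structure of \(\pi_2(\cL)\) controls how long the LLL output \(\vct b_1\) is. The natural guarantee after a single Gauss step is still exponential in \(d\) (see below), so there is no reason to expect a polynomial gap after one round.

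The missing idea is to compare the new first vector with the geometric mean of \(\norm{\vct b_1}\) and \(r_2\), not merely with \(\norm{\vct b_2}\). You already have both ingredients. When the test fails, \(r_2\le\lambda_1\). Gauss reduction returns a shortest vector \(\vct b_1'\) of the rank-two lattice \(\cL(\vct b_1,\vct b_2)\), whose determinant is \(\norm{\vct b_1}\,r_2\). Hermite's constant in dimension two then gives
\[
 \norm{\vct b_1'}^2\le\tfrac{2}{\sqrt3}\,\norm{\vct b_1}\,r_2\le2\norm{\vct b_1}\lambda_1 .
\]
So the ratio \(R:=\norm{\vct b_1}/\lambda_1\) satisfies \(R'\le\sqrt{2R}\), i.e.\ \(\log R'\le\frac12(\log R+\log2)\). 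Starting from the LLL bound \(R\le2^{(d-1)/2}\), one step yields only \(R'\le2^{(d+1)/4}\). However, the exponent halves, up to an additive constant, in every failed round, so \(R=O(1)\) after \(O(\log d)\) failures. From then on, your \(\sqrt2\)-per-round decrease together with \(R\ge1\) allows only \(O(1)\) further failures. This square-root contraction is exactly the paper's argument.
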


\begin{proof}
After recursively reducing the projected basis and lifting
the result, the basis is size reduced and its projected basis is HKZ
reduced.  By \cref{def:quasi-hkz}, only the norm test remains.  We show
that it can fail only \(O(\log d)\) times.

Write \(\lambda_1:=\lambda_1(\cL(\mat B))\).  If the test fails, size reduction
gives \(\norm{\vct b_2}<\norm{\vct b_1}/\sqrt2\).  Thus Gauss reduction replaces
the first vector by \(\vct b_1'\) with
\(\norm{\vct b_1'}<\norm{\vct b_1}/\sqrt2\).
A shortest vector cannot be parallel to \(\vct b_1\), because
every lattice vector in that direction is an integer multiple of
\(\vct b_1\), whereas \(\vct b_2\) is shorter.  Consequently, the
projection of every shortest vector is nonzero, so the HKZ-reduced
projected basis gives \(\norm{\vct b_2^*}\le\lambda_1\).
 The two-dimensional area bound for Gauss reduction now gives
 \[
 \begin{aligned}
  \norm{\vct b_1'}^2&\le2\norm{\vct b_1}\norm{\vct b_2^*}
  \le2\norm{\vct b_1}\lambda_1,\\
 \frac{\norm{\vct b_1'}}{\lambda_1}
 &\le\sqrt{\frac{2\norm{\vct b_1}}{\lambda_1}}.
\end{aligned}
\]
LLL starts with \(\norm{\vct b_1}/\lambda_1\le2^{(d-1)/2}\).  Repeated square roots make
this ratio bounded by a constant after \(O(\log d)\) failed rounds.
Each subsequent failure reduces \(\norm{\vct b_1}\) by a factor of at least \(\sqrt2\),
so only constantly many more are possible before \(\norm{\vct b_1}<\lambda_1\).
Since \(\vct b_1\) is always a nonzero lattice vector, this cannot
happen, and the loop must stop.

The basis changes preserve the lattice.  The usual exact implementation
of LLL and recursive lifting keeps the intermediate bases and recorded
changes polynomial in the original input size.
\end{proof}

\begin{proof}[of \cref{thm:main}]
We first show that the recursion gives an HKZ-reduced basis when it
finds a shortest vector at every call.  We then bound the probability
that a search fails and the total cost.

\mypara{Correctness}
Suppose that, at every recursive call, at least one of the \(n^2\) solver
runs returns a shortest vector.  We prove by induction on the rank that
\textsc{Reduce} returns an HKZ-reduced basis of the same lattice.
For \(d=1\), the returned rank-one basis is HKZ reduced.
Assume henceforth that \(d\ge2\).
By induction, the recursive calls in the loop return HKZ-reduced
projected bases.  Thus \cref{lem:kannan-preprocessing} shows that the
loop stops after \(O(\log d)\) rounds with a quasi-HKZ basis.

By \cref{thm:quasi-hkz-solver}, each solver output is a nonzero lattice
vector.  Dividing its coefficients by their gcd keeps it nonzero and in
the lattice and cannot increase its length.  At least one solver output is
shortest and is unchanged by gcd division, since a nontrivial common divisor
would yield a shorter nonzero lattice vector.  Every resulting vector has
norm at least that of this shortest output.  Hence the shortest resulting
vector \(\vct v\) is shortest.  Its coefficients have gcd one, so it can be
extended to a basis.

LLL preserves the lattice because the input list still contains the old
basis.  It also keeps \(\vct v\) first: no nonzero lattice vector is
shorter, so the LLL test never swaps it out of the first position.  The
last recursive call reduces the new projected basis, and lifting makes
the full basis size reduced.  By \cref{def:quasi-hkz}, a shortest first
vector and an HKZ-reduced projected basis give an HKZ-reduced basis.
This proves the induction and the correctness of the top-level output.

\mypara{Success probability}
Consider the next group of \(n^2\) solver runs.  Fix the full history
before these runs and suppose that every earlier group found a shortest
vector.  The argument above shows that the current input is quasi-HKZ.
For this fixed input, \cref{thm:quasi-hkz-solver} bounds the failure
probability of one run by \(1/3\).  The runs use independent seeds, so
the probability that all \(n^2\) fail is at most \(3^{-n^2}\).

Before the first such failure, \cref{lem:kannan-preprocessing} bounds the
number of loop rounds by \(O(\log d)\) at rank \(d\).  There is one
lower-rank call per round and one more after inserting \(\vct v\).
The number of recursive calls is therefore at most
\[
 \prod_{d=2}^n O(\log(d+1))=\exp(O(n\log\log n)).
\]
A union bound over the possible first failure bounds the probability of
any such failure by
\[
 \exp(O(n\log\log n))\,3^{-n^2}
 =2^{-\Omega(n^2)}<\frac13.
\]
Thus all the chosen vectors are shortest with probability at least
\(2/3\).

\mypara{Complexity}
First suppose that every group finds a shortest vector.  Here
\(|\mat B|\) denotes the original input length.  As noted in the proof of
\cref{lem:kannan-preprocessing}, the intermediate bases have bit length
polynomial in \(|\mat B|\).  By \cref{thm:quasi-hkz-solver}, each run
at rank \(d\) therefore takes
\[
 2^{O(n)}\exp(O(d\log\log n))\,
 d^{d/(4e)+o(d)}\poly(|\mat B|)
\]
bit operations.  The \(n^2\) repetitions add only a polynomial factor.
All other work at each node is polynomial.  The recursion has
\(\exp(O(n\log\log n))=n^{o(n)}\) nodes, and \(d\le n\), so the total
time is
\[
 n^{n/(4e)+o(n)}\poly(|\mat B|).
\]
The solver uses polynomial space by \cref{thm:quasi-hkz-solver}.
We run the calls one after another,
and each recursive call decreases the rank.  At most \(n\) levels are
active at once, so the total space is polynomial.

Choose the top-level time and space limits to cover these bounds.  They
are never reached when every group finds a shortest vector.  On any
other execution they bound the cost, and stopping returns the saved
nonzero vector \(\vct b_1\).  The success probability remains at
least \(2/3\), which proves the theorem.
\end{proof}

\section{Conclusion and Future Work}
\label{sec:conclusion}

We presented a randomized polynomial-space algorithm for exact SVP with running time \(n^{\frac{n}{4e}+o(n)}\). The main idea is to replace the exhaustive enumeration step in Kannan's algorithm~\cite{Kannan87,HanrotStehle07} with the low-space collision search of Lyu and Zhu~\cite{LyuZhu23}. For this approach to work, we design a sampling distribution so that a shortest vector can be written in many ways as the difference of two samples, while the number of irrelevant collisions stays under control.

This work leaves two natural directions for future research. The first is to improve the running time while keeping the space polynomial. Possible approaches include better sampling distributions, faster low-space collision methods, and a reduction in the loss caused by the random grid. It would also be interesting to obtain a similar improvement with a deterministic algorithm.

The second direction is to study whether these ideas can be used in practice, for example within the General Sieve Kernel (G6K) framework and implementation~\cite{G6K19}. The current algorithm is not meant for direct implementation. In particular, the array length \(m=2^{64n}\Pi\) is very large, and the random grid must be sampled \(O(1/\gamma)\) times, where \(\gamma\) is exponentially small. Moreover, each array entry is computed only when it is needed, so every access requires exact sampling, recomputation, and several hash evaluations. The quasi-HKZ preprocessing adds another large cost.

These costs come from parameter choices and tools used to prove the asymptotic bound, and a practical version would likely use a different setting. Possible changes include much smaller parameters, heuristic sampling and pruning, collision detection in batches, and reuse of the vector database already maintained by G6K~\cite{GamaNguyenRegev10,MicciancioWalter15,G6K19}. An interesting question is whether the main idea, i.e., giving a short vector many representations and then finding a useful collision, still works well after these changes.

\subsubsection*{AI disclosure.}
GPT-5.6 Sol was used interactively throughout the preparation of this manuscript, including early research, organization, exposition, and drafting. At an early stage, the model proposed a classical polynomial-space approach with running time \(n^{\frac{3n}{8e}+o(n)}\), based on the result of Chen et al.~\cite{ChenEtAl25}. Further exploration led to the present algorithm, which uses the low-space collision search of Lyu and Zhu~\cite{LyuZhu23}, improves the running time to \(n^{\frac{n}{4e}+o(n)}\), and gives a simpler analysis. The authors checked and substantially revised the final manuscript and remain solely responsible for all mathematical claims, proofs, references, and conclusions.

\bibliographystyle{splncs04}
\bibliography{references}

\end{document}